\documentclass[twocolumn]{autart}    

\usepackage[T1]{fontenc}
\usepackage{graphicx}          
\usepackage{amssymb,amsmath,bm}
\usepackage{mathrsfs}  
\usepackage[colorlinks=true,linkcolor=blue]{hyperref}
\newcommand{\R}{\mathbb R}

\DeclareMathOperator{\diag}{diag}

\newcommand{\norm}[1]{\left|\left| #1 \right|\right|}

\newcommand{\sign}{\textup{sign}}

\makeatletter

\newcommand{\DeclareIndependentEnv}[3]{%
  \@ifundefined{c@#1ctr}{\newcounter{#1ctr}}{}%
  \@ifundefined{#1}{%
    \newenvironment{#1}[1][]%
    {%
      \refstepcounter{#1ctr}%
      \par\medskip
      \noindent{\bfseries #2~\arabic{#1ctr}%
      \def\temp{##1}\ifx\temp\@empty\else\ (##1)\fi.}%
      #3\ %
    }%
    {\par\medskip}%
  }{%
    \renewenvironment{#1}[1][]%
    {%
      \refstepcounter{#1ctr}%
      \par\medskip
      \noindent{\bfseries #2~\arabic{#1ctr}%
      \def\temp{##1}\ifx\temp\@empty\else\ (##1)\fi.}%
      #3\ %
    }%
    {\par\medskip}%
  }%
}

\DeclareIndependentEnv{ass}{Assumption}{\itshape}
\DeclareIndependentEnv{thm}{Theorem}{\itshape}
\DeclareIndependentEnv{lem}{Lemma}{\itshape}
\DeclareIndependentEnv{cor}{Corollary}{\itshape}
\DeclareIndependentEnv{prop}{Proposition}{\itshape}
\DeclareIndependentEnv{rem}{Remark}{\itshape}

\makeatother

\renewenvironment{pf}
{\par\noindent{\bf Proof.}\ }
{\hfill\rule{1.5mm}{1.5mm}\par}

\begin{document}

\begin{frontmatter}

\title{Modular Sign Compensation for MIMO Systems with Unknown Control Direction: An Exact Nominal Recovery Approach\thanksref{footnoteinfo}}

\thanks[footnoteinfo]{This paper was not presented at any IFAC meeting.
Corresponding author: D. Guti\protect\'{e}rrez-Oribio.}

\author[Paestum]{Diego Guti\protect\'{e}rrez-Oribio}\ead{diego.gutierrez@ensta.fr},
\author[Paestum]{Ioannis Stefanou}\ead{ioannis.stefanou@ensta.fr}

\address[Paestum]{IMSIA (UMR 9219), CNRS, EDF, ENSTA Paris,
Institut Polytechnique de Paris, 828 Boulevard des Mar\protect\'{e}chaux,
91762 Palaiseau, France}

\begin{keyword}
Switching controllers; unknown control direction, adaptive control; analysis of systems with uncertainties; MIMO systems.
\end{keyword}

\begin{abstract} 
This paper addresses stabilization of MIMO systems with uncertain time-varying diagonal input direction. We propose a modular switching sign-compensation layer acting as an outer wrapper around a nominal controller. Unlike Nussbaum-type gains, monitoring functions, or binary adaptive mechanisms, the method uses only bounded sign changes that preserve the nominal control magnitude and its properties. The compensation layer uses adaptive variables built from nominal Lyapunov quantities to search for the unknown input-sign configuration based on schedulers. Two schedulers are developed: a vector scheduler, where each input channel explores its own sign compensation and admits an online trapping certificate, and a scalar pattern scheduler, where one variable visits all diagonal sign matrices and gives a design-time recovery guarantee on sufficiently long constant-sign intervals. Once the correct sign configuration is set, the actual closed loop coincides with the nominal closed loop and the original nominal stability property is recovered. The approach is illustrated on a flight roll-reversal problem, a visual-servoing benchmark, and an underground-reservoir control example motivated by human induced-seismicity mitigation. 
\end{abstract}

\end{frontmatter}

\section{Introduction}
\label{sec:Introduction}

The design of feedback controllers usually assumes that the control direction is
known. In SISO systems this information is the sign of the high-frequency gain (\textit{i.e.}, the sign of the leading coefficient of the numerator of its transfer function),
whereas in MIMO systems it is encoded in the structure and signs of the
high-frequency gain matrix. If this information is wrong, a stabilizing feedback
may act as destabilizing positive feedback. This issue is relevant in several situations, including actuator polarity
uncertainty, loss or reversal of control effectiveness
\cite{b:tao2004adaptive,b:gou2024adaptive},
uncertain control allocation and fault-tolerant control
\cite{b:johansen2013control,b:amin2019review},
and MIMO systems in which the physical effect of an actuator on the regulated
output is not known a priori \cite{b:wang2020mimo}.

The unknown-control-direction problem has a long history in adaptive control.
A. S. Morse formulated the question of whether a universal adaptive stabilizer could
be constructed without knowing the sign of the high-frequency gain
\cite{b:morse1983recent}. R. D. Nussbaum answered this question by introducing an
oscillatory adaptive gain whose integral alternates between arbitrarily large
positive and negative values \cite{b:nussbaum1983remarks}. This idea led to the
so-called Nussbaum-gain technique, which has been widely used
in adaptive stabilization, nonlinear backstepping, output regulation, neural and
fuzzy adaptive control, and fault-tolerant control; see, for example,
\cite{b:ye1998adaptive,b:zhang2000adaptive,b:liu2006global,b:chen2019nussbaum}. The subsequent result of
Willems and Byrnes established global adaptive stabilization in the absence of
high-frequency-gain sign information for a class of linear systems
\cite{b:willems1984global}. 

Despite their theoretical importance, Nussbaum-type methods modify the effective
control amplitude through an adaptive gain. Their analysis relies on the
unbounded oscillatory growth of this gain, and the resulting transients may be
large. Moreover, the extension to time-varying or multivariable control
directions is nontrivial. In \cite{b:chen2019nussbaum}, the authors showed that, for time-varying unknown control coefficients, a general Nussbaum function is not automatically sufficient and
additional structural properties are required. In the
MIMO case, the difficulty is even more pronounced because several adaptive gains
may interact destructively. Recently, \cite{b:wang2020mimo} addressed this issue by
constructing special multivariable Nussbaum functions and coordinated
periodical intervals, using the nonzero leading principal minors of the
high-frequency gain matrix while allowing their signs to be unknown. This result significantly relaxes classical MIMO adaptive
control assumptions, but it remains an adaptive backstepping design based on
Nussbaum-type gain modulation.

Other works replace Nussbaum functions with logic-based switching rules
driven by a Lyapunov performance index. In~\cite{b:WU2016298}, a
switching-type adaptive controller is proposed for nonlinear systems with
multiple unknown control directions. This result was extended
in~\cite{b:8249868} to a more general adaptive switching framework that
simultaneously covers global exponential and finite-time stability for
lower-triangular nonlinear systems. These results are powerful, but their
construction is intrinsically linked to a specific nonlinear structure of the system.
There, the Lyapunov function, virtual controls, adaptive laws, and supervisory switching signal are all obtained recursively from the triangular/backstepping structure. Thus, the switching mechanism is not a modular wrapper around a given nominal controller, but a co-designed component tied to a specific plant class and Lyapunov construction.

Another line of work avoids unbounded Nussbaum gains by using
switching, monitoring functions, or binary adaptive mechanisms. For example,
sliding-mode designs for uncertain nonlinear systems with unknown control
direction were proposed in~\cite{b:oliveira2007control} and later extended to
multivariable systems in~\cite{b:oliveira2010sliding}. Periodic switching
functions were also used for output-feedback tracking of SISO plants with
unknown, and even time-varying, control direction
\cite{b:oliveira2011output,b:oliveira2015global}. Related binary adaptive
and monitoring-function approaches were developed in
\cite{b:bartolini2009second,b:oliveira2016binary,b:teixeira2021binary}. These
methods avoid unbounded gain growth, but the switching mechanism is embedded in
the specific sliding-mode, reference-model, or adaptive-control architecture.
Thus, they do not directly provide a modular sign-compensation wrapper that can
be placed around an arbitrary nominal stabilizing controller while preserving
its magnitude and recovering its nominal Lyapunov properties.

The contribution of the present paper is a nominal-controller-preserving sign adaptation
mechanism for MIMO systems with diagonal uncertain control direction. Compared with
Nussbaum-gain methods, the proposed strategy avoids unbounded gain growth and preserves the nominal control magnitude. Compared with Lyapunov performance index, binary adaptive and
monitoring-function approaches, the proposed method is not tied to a particular
reference-model architecture, observer structure, relative-degree reduction, or
sliding-mode parameterization. Instead, it acts as an outer sign wrapper around
a large class of nominal controllers (with or without dynamic/integral extension) for which the Lyapunov inequalities of the nominal control are
used. Consequently, the nominal Lyapunov decay, and therefore the stability property certified by the nominal design, is recovered after the sign-compensation transient.

To the best of the authors' knowledge, the proposed approach differs from existing Nussbaum, monitoring-function, binary adaptive, and sliding-mode designs because it provides a modular diagonal-sign wrapper that: (i) leaves the nominal control magnitude unchanged, (ii) uses only a nominal Lyapunov dissipation estimate, (iii) preserves a degraded Lyapunov decay under arbitrary locally finite sign
variations when sign-mismatch terms are dominated by the nominal Lyapunov
dissipation, and (iv) recovers the nominal closed-loop behavior once the correct sign interval is trapped. The result is illustrated on a flight vehicle, a robotic visual-servoing example, and an induced-seismicity mitigation problem in underground reservoirs. 


\section{Problem statement}
\label{sec:Problem}

Consider the uncertain MIMO system
\begin{equation}
    \dot{\sigma}
    =
    B(\sigma,t)S(t)u+\phi(\sigma,t),
    \label{eq:mimo_system}
\end{equation}
where $\sigma\in\R^n$ is the state, $u\in\R^m$ is the control input, and $\phi \in \R^n$ is an external perturbation. The system is assumed to have at least as many inputs as states, \textit{i.e.}, $m\ge n$. The control coefficient is decomposed into two terms: $B(\sigma,t)\in\R^{n\times m}$ and $S(t)\in\R^{m\times m}$. The first one is defined as
\begin{equation}
  B(\sigma,t)=(I+\Delta(\sigma,t))B_0(\sigma,t)
  \label{eq:B}
\end{equation}
where $B_0\in\R^{n\times m}$ is a known matrix and has full row rank, while $\Delta(\sigma,t)\in\R^{n\times n}$ is an uncertain multiplicative
perturbation satisfying
\begin{equation}
    \norm{\Delta(\sigma,t)}\le \bar\delta ,
    \label{eq:Delta_bound_mimo}
\end{equation}
for some known constant $\bar\delta\ge0$. Since $B_0$ has full row rank, there
exists a right inverse $B_0^+\in\R^{m\times n}$ such that $B_0B_0^+=I_n$.

The matrix $S(t)=\diag(s_1(t),\ldots,s_m(t))$, $s_i(t)\in\{-1,+1\}$, represents a time-varying input-sign matrix assumed to be piecewise constant and unknown. Thus, the problem addressed here is diagonal actuator-sign uncertainty around a known full-row-rank input map, which is a subclass of a general unknown MIMO high-frequency gain matrix. When $S(t)=I_m$, there exists a nominal (dynamic) controller of the form
\begin{equation}
\begin{aligned}
    \bar u(\sigma,\nu)&=B_0^+f_1(\sigma,\nu),\\
    \dot\nu&=f_2(\sigma,\nu),
\end{aligned}
\label{eq:mimo_nominal_control}
\end{equation}
where $\nu\in\R^p$ is an auxiliary integral variable, and the terms $f_1:\R^n\times\R^p\to\R^n$, and
$f_2:\R^n\times\R^p\to\R^p$ are control functions; explicit time dependence, needed for tracking or feedforward compensation, is omitted to keep the notation light. The nominal controller is assumed to contain the feedback, feedforward, or set-valued robust term needed to make $(\sigma,\nu)=(0,0)$ an equilibrium of the nominal closed-loop system. In particular, nonvanishing matched perturbations, $\phi(\sigma,t)$, are allowed only when they are compensated in this nominal sense.

Therefore, the nominal closed-loop dynamics reduce to
\begin{equation}
\begin{aligned}
    \dot\sigma
    &=
    (I+\Delta(\sigma,t))f_1(\sigma,\nu)+\phi(\sigma,t),\\
    \dot\nu
    &=
    f_2(\sigma,\nu).
\end{aligned}
\label{eq:mimo_nominal_system}
\end{equation}
The following assumption formalizes the robustness property required from the
nominal controller.

\begin{assum}
\label{ass:mimo_nominal_lyap}
There exists a known positive definite and radially unbounded function
$V(\sigma,\nu) \in C^1(\R^n\times\R^p;\R_{\ge0})$, and a continuous positive definite function
$W(\sigma,\nu):\R^n\times\R^p\to\R_{\ge0}$ such that
\begin{equation}
    \nabla_\sigma V^\top
    \bigl((I+\Delta)f_1+\phi\bigr)
    +
    \nabla_\nu V^\top f_2
    \le
    -W(\sigma,\nu)
    \label{eq:mimo_nominal_lyap}
\end{equation}
for all $(\sigma,\nu)$, and all $t\ge0$.
\end{assum}

Thus, in the absence of sign uncertainty, the nominal feedback renders the
origin robustly stable with respect to the perturbations
$\Delta$ and $\phi$. The objective is to preserve boundedness under locally finite variations of
\(S(t)\), and to recover the nominal behavior on intervals where the unknown
sign matrix is constant and the correct compensation is set.

\section{Switching sign compensation control design}
\label{sec:Control}

We will propose a control design that keeps the nominal law unchanged and
adds only an outer sign compensation. More precisely, let
\begin{equation}
\begin{split}
    p(\sigma,\nu)&:=\nabla_\sigma V(\sigma,\nu),
    \\
    q(\sigma,\nu)&:=B_0^+f_1(\sigma,\nu)=\bar u(\sigma,\nu),
\end{split}
    \label{eq:mimo_p_q_def}
\end{equation}
and apply
\begin{equation}
    u(\sigma,\nu,\theta)=N(\theta)q(\sigma,\nu).
    \label{eq:mimo_adaptive_control}
\end{equation}
Here, $N(\theta)$ is a sign-compensation gain to be designed below. Two types of scheduler design for $N(\theta)$ will be presented: a vector scheduler and a scalar scheduler. In both cases, the wrapper generates a diagonal sign matrix $N(\theta)\in\mathscr S_m:=\{\diag(\pi_1,\ldots,\pi_m):\pi_i\in \{-1,+1 \} \}$. The difference lies in how this matrix is scheduled: the vector scheduler updates the signs componentwise, whereas the scalar scheduler uses one auxiliary variable to select one matrix from $\mathscr S_m$. Respectively, the adaptive variable $\theta$ is either a
vector, with one component per input channel, or a scalar that schedules full
sign patterns.

Denote by $b_i\in\R^n$ the $i$-th column of $B_0$. For each input channel,
define
\begin{equation}
    \psi_i(\sigma,\nu)
    :=
    |q_i|
    \left(
        |p^\top b_i|
        +
        \bar\delta \norm{p}\norm{b_i}
    \right),
    \quad i=1,\ldots,m,
    \label{eq:mimo_psi_i}
\end{equation}
and let
\begin{equation}
    \Psi(\sigma,\nu):=\sum_{i=1}^m\psi_i(\sigma,\nu).
    \label{eq:mimo_Psi}
\end{equation}
The function $\psi_i$ bounds the Lyapunov contribution of the $i$-th channel
when its sign is incorrectly compensated, while $\Psi$ is the corresponding
aggregate bound.

Depending on the scheduler design, two types of adaptation for the auxiliary variable $\theta$ are proposed. One is using the vector information of the nominal control separately, \textit{i.e.}, \eqref{eq:mimo_psi_i}:
\begin{equation}
    \dot\theta_i
    =
    \gamma_i\psi_i(\sigma,\nu),
    \quad
    \gamma_i>0,
    \quad
    i=1,\ldots,m .
    \label{eq:mimo_theta_i}
\end{equation}
The second type will use the merged information \eqref{eq:mimo_Psi} as
\begin{equation}
    \dot\theta
    =
    \gamma\Psi(\sigma,\nu),
    \quad
    \gamma>0,
    \quad
    \theta(0)\ge0 .
    \label{eq:scalar_theta}
\end{equation}

With \eqref{eq:mimo_adaptive_control}, the extended closed-loop system is
\begin{equation}
\begin{aligned}
    \dot\sigma
    &=
    (I+\Delta)B_0S(t)N(\theta)q(\sigma,\nu)+\phi(\sigma,t),\\
    \dot\nu
    &=
    f_2(\sigma,\nu),\\
    \dot\theta
    &=
    F_\theta(\sigma,\nu),
\end{aligned}
\label{eq:mimo_extended_closed_loop}
\end{equation}
where $F_\theta$ is either \eqref{eq:mimo_theta_i} or \eqref{eq:scalar_theta}. Since $V$ is $C^1$ and positive definite, $p(0,0)=0$ and therefore $F_\theta(0,0)=0$. When the signs are correctly compensated, the extended system has the equilibrium set $\mathcal E_\theta:=\{(0,0,\theta):\theta\in\R_{\ge0}^{d_\theta}\}$, where $d_\theta=m$ in the vector case and $d_\theta=1$ for the scalar one. 

Two additional assumptions are required now.
\begin{assum}\label{ass:smooth}
The closed-loop system \eqref{eq:mimo_extended_closed_loop} admits well-defined maximal solutions. If \(f_1\) and \(f_2\) are continuous, solutions are understood in the piecewise-Carath\'eodory sense, with arcs concatenated at the switching instants of \(S(\cdot)\) and \(N(\cdot)\). If additional discontinuities are present, for instance due to a sliding-mode implementation, solutions are understood in the Filippov sense \cite{b:filippov}. In either case, all differential inequalities used below hold for almost every
\(t\) along any trajectory. We also assume that whenever a bounded solution satisfies \(\int_{0}^{\infty}W(\sigma(t),\nu(t))dt<\infty\), then $(\sigma(t),\nu(t)) \to 0$, and that any maximal solution that remains in a compact subset of \(\R^{n}\times\R^p\times\R^{d_\theta}\) is extendable forward in time.
\end{assum}

\begin{assum}
\label{ass:mimo_Psi_bounds}
There exists a constant $c_M>0$ such that
\begin{equation}
    \Psi(\sigma,\nu)
    \le
    c_M W(\sigma,\nu)
    \label{eq:mimo_Psi_bounds}
\end{equation}
for all $(\sigma,\nu)\in\R^n\times\R^p$.
\end{assum}

\begin{rem}
\label{rem:Psi_bound_check}
Assumption~\ref{ass:mimo_Psi_bounds} can be checked from the nominal
controller. Indeed, it is sufficient to verify that there exist constants
$a_M,b_M>0$ such that $\sum_{i=1}^m |q_i|\,|p^\top b_i|\le a_MW$, $\sum_{i=1}^m |q_i|\norm{p}\norm{b_i}\le b_MW $. Then, by \eqref{eq:mimo_psi_i}, Assumption~\ref{ass:mimo_Psi_bounds} holds
with $c_M=a_M+\bar\delta b_M$.

For linear control this condition is immediate. For instance, let
$q(\sigma)=-K\sigma$, $V(\sigma)=\frac12\sigma^\top P\sigma$, with $P=P^\top>0$,
and suppose that the nominal dissipation satisfies $W(\sigma)=\sigma^\top
Q\sigma$, $Q=Q^\top>0$. Denoting by $k_i^\top$ the $i$-th row of $K$, one has
$    \Psi(\sigma)
    \le
    \sum_{i=1}^m
    \norm{k_i}
    \left(
        \norm{Pb_i}
        +
        \bar\delta\norm{P}\norm{b_i}
    \right)
    \norm{\sigma}^2
    \le
    c_M W(\sigma)$, with any $c_M$ larger than the preceding coefficient divided by
$\lambda_{\min}(Q)$.

For homogeneous sliding-mode controllers, the same verification can be
reduced to a compactness argument. The nominal Lyapunov function and the control components are homogeneous with compatible degrees (see \textit{e.g.}, \cite{b:6144710,b:CRUZZAVALA2017232,b:Mathey-Moreno-2024}). Hence, once the origin is excluded, the ratio $\Psi/W$ is
bounded on a unit level set of the homogeneous Lyapunov function. This gives
a finite constant $c_M$ and therefore verifies Assumption~\ref{ass:mimo_Psi_bounds}.
\end{rem}

Taking the time derivative of the nominal Lyapunov function along
\eqref{eq:mimo_extended_closed_loop} results in
$    \dot V
    =
    p^\top\bigl((I+\Delta)B_0S(t)N(\theta)q+\phi\bigr)
    +
    \nabla_\nu V^\top f_2 $. Adding and subtracting the nominal part $\nabla_\sigma V^\top
    \bigl((I+\Delta)f_1+\phi\bigr)$ gives
$    \dot V
    \le
    -W
    +
    p^\top(I+\Delta)B_0(SN-I_m)q $. Since $S$ and $N$ are diagonal sign matrices, $SN-I_m=\diag(d_1,\ldots,d_m)$,
where $d_i=s_iN_i-1\in\{0,-2\}$. Moreover, using \eqref{eq:Delta_bound_mimo} one can get
$    |q_i p^\top(I+\Delta)b_i|
    \le
    |q_i|
    \left(
        |p^\top b_i|
        +
        \bar\delta\norm{p}\norm{b_i}
    \right)
    =
    \psi_i$. Thus, in the worst case scenario with $\mathcal J(t):=\{i:s_i(t)N_i(\theta(t))=-1\}$, one obtains
\begin{equation}
    \dot V
    \le
    -W
    +
    2\sum_{i\in\mathcal J(t)}\psi_i
    \le
    -W+2\Psi .
    \label{eq:mimo_wrong_bound}
\end{equation}
In particular, whenever the signs are correctly compensated, namely
$S(t)N(\theta(t))=I_m$, the nominal dissipation is recovered:
\begin{equation}
    \dot V\le -W .
    \label{eq:mimo_correct_bound}
\end{equation}

These two latter bounds of the Lyapunov function derivative will be used for the scheduler design. The estimate \eqref{eq:mimo_wrong_bound} also gives a common degraded
stability property, which is independent of the way in which the
sign-compensation matrix is scheduled. 

\begin{prop}
\label{prop:common_degraded_stability}
Consider the extended closed-loop system \eqref{eq:mimo_extended_closed_loop} generated by either the vector adaptation \eqref{eq:mimo_theta_i} or the scalar adaptation \eqref{eq:scalar_theta}, under Assumptions~\ref{ass:mimo_nominal_lyap}--\ref{ass:mimo_Psi_bounds}. If
\begin{equation}
    2c_M<1 ,
    \label{eq:common_degraded_condition}
\end{equation}
holds, then, for any locally finite switching signal $S(\cdot)$ and for the switching
matrix $N(\cdot)$ generated by the corresponding scheduler, every solution
satisfies
\begin{equation}
    \dot V
    \le
    -\eta W(\sigma,\nu),
    \qquad
    \eta:=1-2c_M>0,
    \label{eq:common_degraded_dissipation}
\end{equation}
for almost all $t\ge0$. Consequently, $V$ is nonincreasing. Under the convergence condition included in Assumption~\ref{ass:smooth}, $(\sigma,\nu)$ tends to zero. If, in addition, the nominal estimate \eqref{eq:mimo_nominal_lyap} certifies exponential or finite-time stability through a comparison inequality of $W$ with $V$, then the same qualitative property holds for \eqref{eq:common_degraded_dissipation}, with the rate multiplied by $\eta$. Moreover, the auxiliary scheduler variables are bounded, nondecreasing, and converge to finite limits.
\end{prop}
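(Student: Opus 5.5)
The argument starts from the worst-case estimate \eqref{eq:mimo_wrong_bound}, which holds for almost every $t$ along any solution, whatever $S(\cdot)$ and $N(\theta(\cdot))$ are, because on each Carath\'eodory/Filippov arc between switching instants $SN-I_m$ is a diagonal matrix with entries in $\{0,-2\}$ (Assumption~\ref{ass:smooth}). Combining it with Assumption~\ref{ass:mimo_Psi_bounds} gives $\dot V\le -W+2\Psi\le -W+2c_MW=-(1-2c_M)W=-\eta W$, and \eqref{eq:common_degraded_condition} makes $\eta>0$. Since the switching is locally finite and $V$ is $C^1$, $V(\sigma(t),\nu(t))$ is absolutely continuous on every compact interval. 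Hence $V$ is nonincreasing and $V(t)\le V(0)$ on the maximal interval of existence.

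Radial unboundedness of $V$ then confines $(\sigma,\nu)$ to the compact sublevel set $\{V\le V(0)\}$. Integrating the dissipation inequality gives $\eta\int_0^t W\,ds\le V(0)-V(t)\le V(0)$. For the scheduler variables, $\dot\theta_i=\gamma_i\psi_i\ge0$ (resp.\ $\dot\theta=\gamma\Psi\ge0$), so they are nondecreasing. Moreover,
\begin{equation*}
\theta_i(t)-\theta_i(0)\le\gamma_i\int_0^t\Psi\,ds\le\gamma_i c_M\int_0^tW\,ds\le \frac{\gamma_ic_M V(0)}{\eta},
\end{equation*}
and the same bound holds with $\gamma$ in the scalar case. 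Thus $\theta$ is bounded and, being monotone, converges to a finite limit. The full state $(\sigma,\nu,\theta)$ therefore stays in a compact set, and by the extendability clause of Assumption~\ref{ass:smooth} solutions exist on $[0,\infty)$. At that point $\int_0^\infty W<\infty$ along a bounded solution, so the convergence clause of Assumption~\ref{ass:smooth} yields $(\sigma,\nu)\to0$.

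For the rate statement, suppose the nominal certificate uses a comparison inequality $W\ge \kappa V^\alpha$: $\alpha=1$ for exponential stability, $\alpha\in(0,1)$ for finite time. Then $\dot V\le-\eta\kappa V^\alpha$, and the standard comparison lemma gives the same qualitative conclusion with $\kappa$ replaced by $\eta\kappa$. Concretely, $V(t)\le V(0)e^{-\eta\kappa t}$ in the exponential case, and the settling time $V(0)^{1-\alpha}/(\eta\kappa(1-\alpha))$ in the finite-time case.

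The main obstacle is not the algebra but making sure the inequality really holds almost everywhere across switching and Filippov regularization. In the Filippov case the right-hand side at discontinuities is a convex combination, not a single sign matrix. I would handle this by noting that the bound $p^\top(I+\Delta)B_0(SN-I)q\le 2\Psi$ is affine in the selected vector field, so it survives convexification. This is exactly what Assumption~\ref{ass:smooth} postulates, so I would invoke it directly rather than re-derive it.
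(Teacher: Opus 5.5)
Your proposal is correct and follows the paper's proof. You combine \eqref{eq:mimo_wrong_bound} with Assumption~\ref{ass:mimo_Psi_bounds} to get $\dot V\le-\eta W$, integrate to bound $\int W$ and hence the monotone scheduler variables through $\dot\theta_i\le\gamma_i c_M W$ (resp.\ $\dot\theta\le\gamma c_M W$), and then invoke Assumption~\ref{ass:smooth} for convergence and the comparison inequality $W\ge\kappa V^\alpha$ for the rates; your extra steps (forward completeness via radial unboundedness and the extendability clause, and the Filippov convexification remark) only make explicit what the paper leaves implicit.
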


\begin{pf}
The proof follows directly from the scheduler-independent estimate
\eqref{eq:mimo_wrong_bound}. Indeed, by Assumption~\ref{ass:mimo_Psi_bounds},
$    \dot V
    \le
    -W+2\Psi
    \le
    -(1-2c_M)W
    =
    -\eta W $, which proves \eqref{eq:common_degraded_dissipation}. Since $V$ is nonincreasing, $(\sigma,\nu)$ is bounded and \eqref{eq:common_degraded_dissipation} gives $\int_0^\infty W(\sigma(t),\nu(t))dt<\infty$; convergence follows from Assumption~\ref{ass:smooth}. The exponential and finite-time cases follow immediately when the nominal proof uses, respectively, $W\ge aV$ or $W\ge aV^\alpha$, $\alpha\in(0,1)$. It remains only to bound the auxiliary variables. In the vector case,
$\dot\theta_i=\gamma_i\psi_i\le\gamma_i\Psi\le\gamma_ic_MW$. Combining this
estimate with \eqref{eq:common_degraded_dissipation} gives
$    \theta_i(t)-\theta_i(0)
    \le
    \frac{\gamma_ic_M}{\eta}\bigl(V(0)-V(t)\bigr)
    \le
    \frac{\gamma_ic_M}{\eta}V(0)$, for $i=1,\ldots,m$, where the shorthand $V(t):=V(\sigma(t),\nu(t))$ will be used hereinafter. Thus, every $\theta_i$ is bounded. Since $\dot\theta_i\ge0$, each $\theta_i$
converges to a finite limit.

In the scalar case, $\dot\theta=\gamma\Psi\le\gamma c_MW$, and the same
argument gives
$    \theta(t)-\theta(0)
    \le
    \frac{\gamma c_M}{\eta}\bigl(V(0)-V(t)\bigr)
    \le
    \frac{\gamma c_M}{\eta}V(0)$. Hence $\theta$ is also bounded, nondecreasing, and converges to a finite
limit.
\end{pf}

This result is not the main recovery
mechanism, since it requires the sign-mismatch terms to be dominated by the nominal Lyapunov
dissipation, but it shows that both schedulers preserve boundedness under arbitrary locally
finite sign variations. Furthermore, even if condition \eqref{eq:common_degraded_condition} is met, exact nominal stability properties are not recovered. 

For that purpose, we will now present the chosen schedulers starting by the vector case. We use the following terminology. On an interval \([T_a,T_b]\) where
$S(t)\equiv\bar S:=\diag(\bar s_1,\ldots,\bar s_m)$ is constant, a scheduler interval is called correct if the sign matrix
assigned to that interval satisfies \(\bar S N=I_m\). Thus, the wrapped controller coincides exactly with the
nominal closed loop. A correct interval is called trapping on \([T_c,T_b]\),
\(T_c\in[T_a,T_b]\), if, once entered at \(T_c\), the scheduler cannot leave it
before \(T_b\). Equivalently, \(\bar S N(\theta(t))=I_m\) for all \(t\in[T_c,T_b]\), and exact
nominal recovery holds throughout that interval.

\subsection{Independent vector scheduler}
\label{subsec:vector_scheduler}

The vector scheduler is defined by \(\theta=(\theta_1,\ldots,\theta_m)^\top\) and
\(N(\theta)=\diag(N_1(\theta_1),\ldots,N_m(\theta_m))\). Each channel
\(\theta_i\) evolves according to \eqref{eq:mimo_theta_i}. The switching
thresholds are chosen as
\begin{equation}
    \Theta_{i,0}=0,
    \qquad
    \Theta_{i,k+1}-\Theta_{i,k}=\gamma_i\lambda_k,
    \qquad
    \lambda_k=\lambda_0r^k,
    \label{eq:vector_thresholds}
\end{equation}
for \(i=1,\ldots,m\) and \(k\in\mathbb N_0\), where \(\lambda_0>0\) and
\(r>1\) are design parameters. For each \(\theta_i\), define
\(k_i:=\max\{k\in\mathbb N_0:\Theta_{i,k}\le\theta_i\}\). Then the
\(i\)-th sign compensation is
\begin{equation}
    N_i(\theta_i)=(-1)^{k_i}
    \quad\text{if}\quad
    \theta_i\in[\Theta_{i,k_i},\Theta_{i,k_i+1}),
    \label{eq:mimo_N_i}
\end{equation}
for \(i=1,\ldots,m\). Therefore, once the scheduler \eqref{eq:vector_thresholds} is a priori fixed, the evolution of the auxiliary variables $\theta_i$ according to \eqref{eq:mimo_theta_i} will produce the switching sign compensation \eqref{eq:mimo_N_i}. The next theorem presents the theoretical result.

\begin{thm}
\label{thm:vector_scheduler}
Consider the extended closed loop \eqref{eq:mimo_extended_closed_loop}
with the vector scheduler \eqref{eq:mimo_theta_i}, \eqref{eq:vector_thresholds},
\eqref{eq:mimo_N_i}, $\lambda_0>0$ and $r>1$, under
Assumptions~\ref{ass:mimo_nominal_lyap}--\ref{ass:mimo_Psi_bounds}. Let
$[T_a,T_b]$ be an interval on which
$S(t)\equiv\bar S:=\diag(\bar s_1,\ldots,\bar s_m)$ is constant. Assume that,
for some $T_c\in[T_a,T_b]$, the vector scheduler reaches a correct interval,
namely $\bar S N(\theta(T_c))=I_m$. Define $d_i(T_c):=
    \gamma_i^{-1} \left(\theta_i(T_c)-\Theta_{i,k_i} \right)$. If
\begin{equation}
    \min_{i=1,\ldots,m}
    \left\{
        \lambda_0 r^{k_i}
        -
        d_i(T_c)
    \right\}
    >c_MV(T_c),
    \label{eq:direct_trapping_condition_vector}
\end{equation}
then the correct interval is trapping. Hence $\bar S N(\theta(t))=I_m$ for all
$t\in[T_c,T_b]$, and the actual closed loop coincides with the nominal closed
loop on $[T_c,T_b]$. Moreover, the auxiliary variables remain bounded on
$[T_c,T_b]$. If $[T_a,T_b]=[T_a,\infty)$, then each $\theta_i(t)$ converges to a finite
limit and the original variables $(\sigma,\nu)$ converge to the origin with the
nominal stability property.
\end{thm}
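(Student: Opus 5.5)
The argument is a contradiction-by-first-exit that combines the nominal decay \eqref{eq:mimo_correct_bound} with the bound $\psi_i\le\Psi\le c_MW$ from Assumption~\ref{ass:mimo_Psi_bounds}.

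First I make the margin explicit. On $[T_c,T_b]$ the $i$-th sign can only change when $\theta_i$ reaches the next threshold $\Theta_{i,k_i+1}=\Theta_{i,k_i}+\gamma_i\lambda_0r^{k_i}$. Since $\theta_i$ is nondecreasing, this is the only possible exit. The remaining room in channel $i$ is therefore
\[
\Theta_{i,k_i+1}-\theta_i(T_c)=\gamma_i\bigl(\lambda_0r^{k_i}-d_i(T_c)\bigr),
\]
and \eqref{eq:direct_trapping_condition_vector} says this exceeds $\gamma_i c_MV(T_c)$ for every $i$.

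Next I suppose the interval is not trapping. Define
\[
t^*:=\inf\{t\in[T_c,T_b]:\bar SN(\theta(t))\neq I_m\}.
\]
By continuity of $\theta$ and the right-closed/left-open threshold convention in \eqref{eq:mimo_N_i}, on $[T_c,t^*)$ one has $\bar SN=I_m$, and at $t^*$ some $\theta_j(t^*)=\Theta_{j,k_j+1}$. On $[T_c,t^*)$ the correct-sign estimate \eqref{eq:mimo_correct_bound} gives $\dot V\le -W$ almost everywhere. Hence
\[
\int_{T_c}^{t^*}W\,dt\le V(T_c)-V(t^*)\le V(T_c).
\]
Integrating \eqref{eq:mimo_theta_i} with $\psi_j\le\Psi\le c_MW$ then gives
\[
\theta_j(t^*)-\theta_j(T_c)\le\gamma_jc_M\int_{T_c}^{t^*}W\,dt\le\gamma_jc_MV(T_c).
\]
This is strictly less than $\gamma_j(\lambda_0r^{k_j}-d_j(T_c))$, so $\theta_j(t^*)<\Theta_{j,k_j+1}$, a contradiction. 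Thus $\bar SN(\theta(t))=I_m$ on all of $[T_c,T_b]$. The closed loop \eqref{eq:mimo_extended_closed_loop} then reduces exactly to the nominal one \eqref{eq:mimo_nominal_system}, since $B_0\bar S N q=B_0q=f_1$.

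For boundedness, the same integral estimate over $[T_c,t]$ gives $\theta_i(t)\le\theta_i(T_c)+\gamma_ic_MV(T_c)$ for all $t\in[T_c,T_b]$. For $T_b=\infty$, $V$ is nonincreasing, so $(\sigma,\nu)$ stays in a compact sublevel set because $V$ is radially unbounded. Together with the bound on $\theta$, Assumption~\ref{ass:smooth} makes solutions extendable forward. The estimate $\int_{T_c}^\infty W<\infty$ and Assumption~\ref{ass:smooth} then give $(\sigma,\nu)\to0$. Each $\theta_i$ is monotone and bounded, hence convergent. Since the dynamics from $T_c$ on are exactly nominal, Assumption~\ref{ass:mimo_nominal_lyap} yields the nominal stability property itself, not a degraded version.

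The main obstacle is making the first-exit argument rigorous under Carathéodory/Filippov solutions. One must confirm that $\theta$ is absolutely continuous, so the integral bound is valid. One must also confirm that $\bar SN$ cannot change before $\theta_j$ actually attains the threshold. I would handle this by noting that $N_i$ depends only on $\theta_i$ through the piecewise-constant map \eqref{eq:mimo_N_i}, and that $S$ is constant on the interval by hypothesis, so the only switching mechanism is a threshold crossing.
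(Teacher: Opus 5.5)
Your proof is correct and follows essentially the same route as the paper. It is a first-exit contradiction that combines $\dot V\le -W$ on the correct cell with $\psi_j\le\Psi\le c_MW$, showing that channel $j$ can advance by at most $\gamma_j c_M V(T_c)$, which is strictly less than its residual length (the paper packages this through the nonincreasing storage $V+c_M^{-1}\sum_i\theta_i/\gamma_i$ and then drops the other nondecreasing channels, while you bound channel $j$ directly via $\int_{T_c}^{t^*}W\,dt\le V(T_c)$, which is equivalent; note also that the cells in \eqref{eq:mimo_N_i} are left-closed/right-open, which is the convention your argument actually uses).
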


\begin{pf}
Introduce the normalized variables $\vartheta_i:=\theta_i/\gamma_i$, so that
$\dot\vartheta_i=\psi_i$. While the solution remains in the correct interval,
\eqref{eq:mimo_correct_bound} is recovered. Using \eqref{eq:mimo_Psi}, \eqref{eq:mimo_theta_i}, \eqref{eq:mimo_Psi_bounds}, it follows that
$    \dot V
    \le
    -\frac{1}{c_M}\sum_{i=1}^m\dot\vartheta_i $. Hence, as long as the trajectory remains in the correct interval,
\begin{equation}
    \frac{d}{dt}
    \left(
        V+\frac{1}{c_M}\sum_{i=1}^m\vartheta_i
    \right)
    \le0 .
    \label{eq:vector_trapping_storage}
\end{equation}
Assume, by contradiction, that the trajectory leaves this interval at a first time
$T_e\in(T_c,T_b]$. Then at least one channel, say channel $j$, reaches its next
threshold. Therefore, $    \vartheta_j(T_e)-\vartheta_j(T_c)
    =
    \gamma_j^{-1} \left(\Theta_{j,k_j+1}-\theta_j(T_c) \right)$, is the normalized distance still available before the next switch of the $j$-th scheduler. Splitting it into the full normalized interval length
minus the part already traveled by the adaptive variable $\theta(t)$ at time $T_c$, one obtains
$    \gamma_j^{-1} \left(\Theta_{j,k_j+1}-\theta_j(T_c) \right)
    =
    \gamma_j^{-1} \left(\Theta_{j,k_j+1}-\Theta_{j,k_j} \right)
    -
    \gamma_j^{-1} \left(\theta_j(T_c)-\Theta_{j,k_j} \right)
    =
    \lambda_0 r^{k_j}
    -
    d_j(T_c)
    =
    \ell_j(T_c)$. Thus the distance required to leave this interval satisfies
$\vartheta_j(T_e)-\vartheta_j(T_c)=\ell_j(T_c)$.

Integrating \eqref{eq:vector_trapping_storage} on
$[T_c,T_e]$ gives
$    \sum_{i=1}^m
    \left(
        \vartheta_i(T_e)-\vartheta_i(T_c)
    \right)
    \le c_MV(T_c)$. Since all $\vartheta_i$ are nondecreasing, the last expression
implies
$    \ell_j(T_c)
    =
    \vartheta_j(T_e)-\vartheta_j(T_c)
    \le c_MV(T_c)$. Consequently,
$    \min_{i=1,\ldots,m}
    \left\{
        \lambda_0 r^{k_i}
        -
        d_i(T_c)
    \right\}
    \le c_MV(T_c)$, which contradicts \eqref{eq:direct_trapping_condition_vector}. Therefore the
correct interval cannot be escaped on $[T_c,T_b]$.

Since $\bar S N(\theta)=I_m$ on this interval and $B_0B_0^+=I_n$, the real
closed loop coincides with the nominal closed loop. Moreover,
$\theta_i(t)<\Theta_{i,k_i+1}$ for every $t\in[T_c,T_b]$, while
$\dot\theta_i\ge0$. Hence the auxiliary variables are bounded on
$[T_c,T_b]$. If $[T_a,T_b]=[T_a,\infty)$, monotonicity of the auxiliary variables $\theta_i$ implies that each $\theta_i(t)$ has a finite limit, and the convergence of $(\sigma,\nu)$ follows from the nominal
Lyapunov result \eqref{eq:mimo_nominal_lyap}.
\end{pf}

Condition \eqref{eq:direct_trapping_condition_vector} can be easily implemented numerically allowing the application of the Theorem.

\begin{cor}
\label{cor:vector_scheduler_online}
Consider the vector scheduler of Theorem~\ref{thm:vector_scheduler}.
Let $T_\ell$ be a switching instant and let $k_{i,\ell}$ be the
post-switch active index of the $i$th channel, namely
$\theta_i(T_\ell)\in[\Theta_{i,k_{i,\ell}},\Theta_{i,k_{i,\ell}+1})$.
Fix $\varepsilon>0$ and define
\begin{equation}
\label{eq:tail_shift}
\Delta_{i,\ell}
:=
\max\left\{
0,\,
\theta_i(T_\ell)+\gamma_i\big(c_MV(T_\ell)+\varepsilon\big)
-\Theta^-_{i,k_{i,\ell}+1}
\right\}.
\end{equation}
At $T_\ell$, update the scheduler thresholds by shifting the future intervals,
\begin{equation}
\label{eq:tail_update}
\Theta^+_{i,j}
=
\Theta^-_{i,j}
+
\Delta_{i,\ell},
\qquad
j\ge k_{i,\ell}+1,
\end{equation}
while keeping $\Theta^+_{i,j}=\Theta^-_{i,j}$ for $j\le k_{i,\ell}$.
Between switching instants, the scheduler thresholds are kept fixed.

Then the threshold ordering is preserved and the interval entered at $T_\ell$
satisfies
\begin{equation}
\label{eq:residual_length_online}
\min_{i=1,\ldots,m}
\frac{
\Theta^+_{i,k_{i,\ell}+1}-\theta_i(T_\ell)
}{\gamma_i}
\ge c_MV(T_\ell)+\varepsilon .
\end{equation}
Consequently, if this interval is correct for the current constant sign matrix,
namely $\bar S N(\theta(T_\ell^+))=I_m$, then it is trapping. Hence
$\bar S N(\theta(t))=I_m$ and the actual closed loop coincides with the nominal
closed loop on that interval. If this interval is unbounded, then each $\theta_i(t)$ converges to a finite
limit and the original variables $(\sigma,\nu)$ converge to the origin with the
nominal stability property.
\end{cor}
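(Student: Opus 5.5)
The plan is to reduce Corollary~\ref{cor:vector_scheduler_online} to Theorem~\ref{thm:vector_scheduler}, applied with $T_c=T_\ell$ and the updated thresholds $\Theta^+_{i,j}$. There are three steps: show the shifted threshold sequence is still a valid scheduler, verify the residual-length bound \eqref{eq:residual_length_online}, and then rerun the trapping argument of Theorem~\ref{thm:vector_scheduler}. That argument never uses the geometric form of the interval lengths. It only uses the residual normalized distance $\gamma_j^{-1}(\Theta_{j,k_j+1}-\theta_j(T_c))$ to the next threshold.

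\emph{Ordering.} Since $\Delta_{i,\ell}\ge0$, the thresholds with $j\le k_{i,\ell}$ are unchanged and those with $j\ge k_{i,\ell}+1$ are all moved up by the same amount. So increments $\Theta^+_{i,j+1}-\Theta^+_{i,j}$ with $j\ge k_{i,\ell}+1$ equal the old ones, which are positive. The increment across the boundary is $\Theta^-_{i,k_{i,\ell}+1}+\Delta_{i,\ell}-\Theta^-_{i,k_{i,\ell}}$, which is at least the old positive gap. Hence strict ordering is preserved. In addition, $\theta_i(T_\ell)\ge\Theta^+_{i,k_{i,\ell}}$ and $\theta_i(T_\ell)<\Theta^-_{i,k_{i,\ell}+1}\le\Theta^+_{i,k_{i,\ell}+1}$. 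So the active index, and therefore $N(\theta(T_\ell^+))$, is unchanged by the update. I will note this explicitly, since it guarantees the update does not itself cause a switch.

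\emph{Residual bound.} By the definition of the maximum in \eqref{eq:tail_shift}, $\Delta_{i,\ell}\ge\theta_i(T_\ell)+\gamma_i(c_MV(T_\ell)+\varepsilon)-\Theta^-_{i,k_{i,\ell}+1}$. Therefore $\Theta^+_{i,k_{i,\ell}+1}-\theta_i(T_\ell)\ge\gamma_i(c_MV(T_\ell)+\varepsilon)$. Dividing by $\gamma_i$ and minimizing over $i$ gives \eqref{eq:residual_length_online}.

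\emph{Trapping.} Suppose $\bar SN(\theta(T_\ell^+))=I_m$. Then on the correct interval \eqref{eq:mimo_correct_bound} holds, and together with Assumption~\ref{ass:mimo_Psi_bounds} this yields the storage inequality \eqref{eq:vector_trapping_storage}. Assume some channel $j$ reaches $\Theta^+_{j,k_{j,\ell}+1}$ at a first exit time $T_e$, before the next switch of $S$ (thresholds are frozen in between). Integrating as in the theorem and using monotonicity of each $\vartheta_i$ gives $\gamma_j^{-1}(\Theta^+_{j,k_{j,\ell}+1}-\theta_j(T_\ell))\le c_MV(T_\ell)$. This contradicts \eqref{eq:residual_length_online}, because $\varepsilon>0$. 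The strict margin $\varepsilon$ plays the role of the strict inequality in \eqref{eq:direct_trapping_condition_vector}.

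\emph{Conclusions and main obstacle.} The remaining statements follow as in Theorem~\ref{thm:vector_scheduler}. Each $\theta_i$ is nondecreasing and bounded by $\Theta^+_{i,k_{i,\ell}+1}$, so it converges. On an unbounded interval the loop is exactly nominal, so \eqref{eq:mimo_nominal_lyap} and Assumption~\ref{ass:smooth} give convergence of $(\sigma,\nu)$. The main point needing care is the justification that the theorem's argument applies with modified, non-geometric thresholds. I would handle this by restating the contradiction step directly in terms of residual distances, rather than citing \eqref{eq:direct_trapping_condition_vector} literally.
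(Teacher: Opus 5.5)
Your proposal is correct and takes the same route as the paper. Ordering is preserved because each channel's future thresholds receive a common nonnegative shift, the residual bound is read off directly from the maximum in \eqref{eq:tail_shift}, and trapping follows from the argument of Theorem~\ref{thm:vector_scheduler} with $T_c=T_\ell$. The paper simply says the theorem's condition holds ``with margin $\varepsilon$''. You instead redo the contradiction in terms of the residual distance $\gamma_j^{-1}(\Theta^+_{j,k_{j,\ell}+1}-\theta_j(T_\ell))$ rather than citing $\lambda_0r^{k_i}-d_i(T_c)>c_MV(T_c)$ literally. That is slightly more careful, because after the shifts the thresholds are no longer geometric.
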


\begin{pf}
Since the same nonnegative quantity $\Delta_{i,\ell}$ is added to all
future thresholds of channel $i$, the ordering of the threshold sequence is
preserved. Moreover, by construction,
$\Theta^+_{i,k_{i,\ell}+1}
\ge
\theta_i(T_\ell)+\gamma_i\big(c_MV(T_\ell)+\varepsilon\big)$,
which gives \eqref{eq:residual_length_online}. Therefore, if the interval
entered at $T_\ell$ is correct, condition \eqref{eq:direct_trapping_condition_vector}
of Theorem~\ref{thm:vector_scheduler} is satisfied with margin
$\varepsilon$. The trapping property and the recovery of the nominal closed
loop follow directly from Theorem~\ref{thm:vector_scheduler}.
\end{pf}

\begin{rem}
The update \eqref{eq:tail_shift}--\eqref{eq:tail_update} is sign-blind: if the visited interval is wrong, its residual length is also enlarged. This could increase the transient spent with an incorrect sign. Therefore, the update should be interpreted as a recovery-oriented implementation rule.
\end{rem}

\subsection{Scalar pattern scheduler}
\label{subsec:scalar_scheduler}

The scalar scheduler uses one variable to visit complete input-sign patterns.
Let $\mathscr S_m=\{\Pi_j\}_{j=1}^{2^m}$ be an enumeration of the diagonal sign
matrices
\begin{equation}
    \Pi_j=\diag(\pi_{j1},\ldots,\pi_{jm}),
    \qquad
    \pi_{ji}\in\{-1,+1\}.
    \label{eq:scalar_pattern_set}
\end{equation}
For instance, for $m=2$ one may choose
\begin{equation}
    \Pi_1=
    I_2,
    \quad
    \Pi_2=
    \begin{bmatrix}
        -1 & 0\\
        0 & 1
    \end{bmatrix},
    \quad
    \Pi_3=
    \begin{bmatrix}
        1 & 0\\
        0 & -1
    \end{bmatrix},
    \quad
    \Pi_4=
    -I_2.
\label{eq:scalar_pattern_set_m2}
\end{equation}
The particular ordering is unimportant, provided that all sign patterns $\Pi_j$ are visited by the scheduler.

The auxiliary variable $\theta$ satisfies the scalar adaptation \eqref{eq:scalar_theta}. Define the scalar scheduler
\begin{equation}
    \Theta_0=0,
    \qquad
    \Theta_{k+1}-\Theta_k=\gamma\lambda_k 
    \qquad
    \lambda_k=\lambda_0r^k,
    \label{eq:scalar_thresholds}
\end{equation}
where $\lambda_0>0$ and $r>1$ are parameters to be selected.
Let $\mu(k):=1+(k\bmod 2^m)$. The sign-compensation matrix in
\eqref{eq:mimo_adaptive_control} is
\begin{equation}
    N(\theta)=\Pi_{\mu(k)}
    \quad
    \text{if }
    \theta\in[\Theta_k,\Theta_{k+1}).
    \label{eq:scalar_pattern_N}
\end{equation}

Therefore, once the scheduler \eqref{eq:scalar_pattern_set}, \eqref{eq:scalar_thresholds} is a priori fixed, the evolution of the auxiliary variable $\theta$ according to \eqref{eq:scalar_theta} will produce the switching sign compensation \eqref{eq:scalar_pattern_N}. The next theorem presents the theoretical result.

\begin{thm}
\label{thm:scalar_scheduler}
Consider the extended closed loop \eqref{eq:mimo_extended_closed_loop}
with the scalar scheduler \eqref{eq:scalar_theta}, \eqref{eq:scalar_thresholds},
\eqref{eq:scalar_pattern_N}, $\lambda_0>0$ and $r>1$, under
Assumptions~\ref{ass:mimo_nominal_lyap}--\ref{ass:mimo_Psi_bounds}. Let
$[T_a,T_b]$ be an interval on which $S(t)\equiv\bar S\in\mathscr S_m$ is
constant. If
\begin{equation}
    r>1+2c_M,
    \label{eq:scalar_r_condition}
\end{equation}
then every sufficiently large correct pattern interval contained in
$[T_a,T_b]$ is trapping. Hence, if the scalar scheduler reaches such an interval
at some $T_k\in[T_a,T_b]$, then $\bar S N(\theta(t))=I_m$ for all
$t\in[T_k,T_b]$, and the actual closed loop coincides with the nominal closed
loop on $[T_k,T_b]$. Moreover, the auxiliary variable remains bounded on
$[T_k,T_b]$. If $[T_a,T_b]=[T_a,\infty)$, then the scalar scheduler is unconditional in the following sense: either a sufficiently large correct interval is reached and trapped in finite time, or $\theta$ remains bounded and $(\sigma,\nu)$ converges to the origin before further sign exploration is needed. In all cases, $\theta(t)$ converges to a finite limit and the control objective is achieved.
\end{thm}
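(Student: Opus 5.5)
The argument mirrors the trapping proof of Theorem~\ref{thm:vector_scheduler}, with a single storage function for the scalar variable. Set $\vartheta:=\theta/\gamma$, so that $\dot\vartheta=\Psi$. While the trajectory stays in a correct pattern interval, \eqref{eq:mimo_correct_bound} and Assumption~\ref{ass:mimo_Psi_bounds} give
\[
\dot V\le -W\le -\tfrac{1}{c_M}\Psi .
\]
Hence $V+\vartheta/c_M$ is nonincreasing there. Suppose the scheduler enters the $k$-th interval at $T_k$ with $\theta(T_k)=\Theta_k$, the interval is correct, and it is left at a first time $T_e\le T_b$. Integrating on $[T_k,T_e]$ gives $\lambda_0 r^k=\vartheta(T_e)-\vartheta(T_k)\le c_M V(T_k)$. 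So the interval is trapping as soon as $\lambda_0 r^k>c_M V(T_k)$. This is the precise meaning I will give to ``sufficiently large.'' Trapping then gives nominal recovery on $[T_k,T_b]$ and $\theta<\Theta_{k+1}$, hence boundedness.

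To show that sufficiently large correct intervals are eventually reached, I need $V(T_k)$ to grow slower than $r^k$. This is where \eqref{eq:scalar_r_condition} enters. On any interval, correct or not, \eqref{eq:mimo_wrong_bound} together with $W\ge \Psi/c_M$ gives
\[
\dot V\le -W+2\Psi\le (2-1/c_M)\Psi\le 2\Psi=2\dot\vartheta .
\]
Thus across the $k$-th interval, $V$ grows by at most $2\lambda_0 r^k$. Summing,
\[
V(T_k)\le V(T_a)+2\lambda_0\frac{r^k}{r-1}+\text{const}.
\]
Therefore
\[
\frac{c_M V(T_k)}{\lambda_0 r^k}\le \frac{c_M V(T_a)}{\lambda_0 r^k}+\frac{2c_M}{r-1}+o(1).
\]
Condition $r>1+2c_M$ makes $2c_M/(r-1)<1$, so the trapping inequality holds for all $k$ beyond some finite $k^\ast$. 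Because $\mu(k)$ cycles through all $2^m$ patterns, a correct pattern $\Pi_{\mu(k)}=\bar S$ occurs within every $2^m$ consecutive indices. I expect this growth-versus-geometric-length comparison to be the main obstacle: the constant must be tracked carefully, for instance by choosing the first comparison index at $T_a$ and absorbing the partial first interval into the constant.

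For $[T_a,\infty)$ there are two cases. If $\theta$ crosses infinitely many thresholds, then by the previous step some correct index $k\ge k^\ast$ is entered in finite time and trapped. After that, $\theta$ is bounded and monotone, hence convergent, and Assumption~\ref{ass:mimo_nominal_lyap} yields the nominal stability property. Otherwise $\theta$ stays in a fixed interval forever, so it is bounded and nondecreasing, and $\int^\infty\Psi\,dt<\infty$. Then $V(t)\le V(T_a)+2(\vartheta(t)-\vartheta(T_a))$ is bounded, so $(\sigma,\nu)$ is bounded. Integrating $\dot V\le -W+2\Psi$ gives $\int^\infty W\,dt<\infty$, and Assumption~\ref{ass:smooth} gives $(\sigma,\nu)\to0$ (forward completeness follows from the same bounds). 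In both cases $\theta$ has a finite limit.
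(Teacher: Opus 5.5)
Your proposal is correct and follows essentially the same route as the paper: the storage function $V+\vartheta/c_M$ on the correct interval, the bound $\dot V\le 2\dot\vartheta$ summed over geometric interval lengths to get $V(T_k)\le V(T_a)+\frac{2}{r-1}\lambda_k$, hence $\bigl(1-\frac{2c_M}{r-1}\bigr)\lambda_k\le c_M V(T_a)$ at any exit, and the bounded/unbounded dichotomy for $\theta$ on $[T_a,\infty)$. The only cosmetic difference is that the paper avoids your extra constant for the partial first interval by bounding $\vartheta(T_k)-\vartheta(T_a)\le \Lambda_k-\Lambda_h$, where $\Lambda_k=\Theta_k/\gamma$ and $h$ indexes the interval containing $\vartheta(T_a)$.
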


\begin{pf}
Introduce the normalized variable $\vartheta:=\theta/\gamma$, so that
$\dot\vartheta=\Psi$. Let $[\Theta_k,\Theta_{k+1})$ be a correct pattern
interval entered at time $T_k\in[T_a,T_b]$, namely
$\theta(T_k)=\Theta_k$ and $\Pi_{\mu(k)}=\bar S$. 

Let $\Lambda_k:=\Theta_k/\gamma$ be the normalized scheduler thresholds. Then
$\Lambda_{k+1}-\Lambda_k=\lambda_k$, with $\lambda_k=\lambda_0r^k$.
Let $h$ be such that
$\vartheta(T_a)\in[\Lambda_h,\Lambda_{h+1})$. Since $\vartheta$ is
nondecreasing and the scheduler enters the interval
$[\Lambda_k,\Lambda_{k+1})$ at $T_k\ge T_a$, one has $h\le k$ and
$    \vartheta(T_k)-\vartheta(T_a)
    =
    \Lambda_k-\vartheta(T_a)
    \le
    \Lambda_k-\Lambda_h
    =
    \sum_{\ell=h}^{k-1}\lambda_\ell $. As the normalized interval lengths are exponential,
$   \sum_{\ell=h}^{k-1}\lambda_\ell
    =
    \lambda_0\sum_{\ell=h}^{k-1}r^\ell
    =
    \lambda_0\frac{r^k-r^h}{r-1}
    <
    \frac{\lambda_0r^k}{r-1}
    =
    \frac{\lambda_k}{r-1}$.
Therefore,
$    \vartheta(T_k)-\vartheta(T_a)
    <
    \frac{\lambda_k}{r-1}$. From \eqref{eq:mimo_wrong_bound}, $\dot V\le -W+2\Psi\le2\dot\vartheta$ on
$[T_a,T_k]$. Thus,
$ V(T_k)
    \le
    V(T_a)+\frac{2}{r-1}\lambda_k $.

While $\theta(t)\in[\Theta_k,\Theta_{k+1})$, one has $N(\theta)=\bar S$ and
therefore $\bar S N(\theta)=I_m$. Hence, \eqref{eq:mimo_Psi_bounds}, \eqref{eq:mimo_correct_bound} give
$    \dot V
    \le
    -\frac{1}{c_M}\dot\vartheta $. Consequently, as long as the trajectory remains in the correct interval,
\begin{equation}
    \frac{d}{dt}
    \left(
        V+\frac{1}{c_M}\vartheta
    \right)
    \le0 .
    \label{eq:scalar_monotone_quantity}
\end{equation}
If the trajectory exits the correct interval, then $\lambda_k \le c_M V(T_k)
\le c_M V(T_a)+\frac{2c_M}{r-1}\lambda_k $. Equivalently, $\left(1-\frac{2c_M}{r-1}\right)\lambda_k
\le c_M V(T_a)$. Since $r>1+2c_M$, the coefficient on the left-hand side is positive, while
$\lambda_k=\lambda_0 r^k\to\infty$. Hence the inequality is impossible for all
sufficiently large correct intervals. Therefore, every
sufficiently large correct pattern interval is trapping.

Once such an interval is trapping, $\theta(t)<\Theta_{k+1}$ and
$\dot\theta\ge0$ for all $t\in[T_k,T_b]$. Hence $\theta$ remains bounded on
$[T_k,T_b]$. Since $\bar S N(\theta)=I_m$ on this interval and
$B_0B_0^+=I_n$, the real closed loop coincides with the nominal one.

Assume now that \(\theta\) is bounded on $[T_a,T_b]=[T_a,\infty)$. Since \(\theta\) is nondecreasing, \(\theta(t)\) has a finite limit. Hence, integrating the scheduler dynamics gives \(\int_{T_a}^{\infty}\Psi(t)dt<\infty\). From \eqref{eq:mimo_wrong_bound}, it follows that \(\int_{T_a}^{\infty}W(\sigma(t),\nu(t))dt<\infty\). Since the solution is bounded, convergence follows by Barbalat's lemma in the piecewise-Carath\'eodory case, or by the corresponding weak invariance argument in the Filippov case (see Assumption \ref{ass:smooth}). This alternative yields convergence of \((\sigma,\nu)\) without necessarily identifying the correct sign pattern. Since the matrices in $\mathscr S_m$ are assigned periodically, if $\theta$ is unbounded, every sign pattern is visited infinitely many times; hence a sufficiently large correct interval is reached. By the first part of the proof, this interval is trapping, so $\theta$ becomes bounded, which is a contradiction. Thus $\theta(t)$ always converges to a finite limit. Moreover, whenever convergence does not occur before exploration stops, the scalar scheduler must recover a correct interval in finite time, after which the nominal Lyapunov result \eqref{eq:mimo_nominal_lyap} applies.
\end{pf}

\subsection{Comparison between the two schedulers}
\label{subsec:scheduler_comparison}

Both schedulers implement the same modular compensation law
\eqref{eq:mimo_adaptive_control}; hence, neither modifies the nominal control
magnitude. They differ in the tradeoff between scalability and pattern
reachability.

The vector scheduler uses one auxiliary variable per input channel, so its
complexity grows linearly with \(m\). Its result is conditional: if a correct
vector cell is reached with sufficient residual length, then
Theorem~\ref{thm:vector_scheduler} makes it trapping. The online update of
Corollary~\ref{cor:vector_scheduler_online} enforces this residual-length
condition at every visited cell, so recovery is certified whenever the visited
cell is correct. The remaining reachability question is application-dependent:
it can be checked along the closed-loop trajectory, or strengthened in specific
plants using additional excitation, phase-selection, or supervisory exploration
rules. Developing such coordinated vector schedulers while preserving linear or
near-linear complexity is left for future work.

The scalar scheduler uses a single auxiliary variable to enumerate all matrices
in \(\mathscr S_m\). Thus, on a constant-sign interval, correct patterns occur
periodically and their lengths grow unbounded. Consequently, under
\(r>1+2c_M\), Theorem~\ref{thm:scalar_scheduler} guarantees that either a
sufficiently large correct interval is reached and trapped, or the state
converges before further exploration is needed. The price is combinatorial,
since \(|\mathscr S_m|=2^m\). In the extreme and physically unrealistic case where the plant sign changes are coincident with the scheduler's so that \(S(t)\) changes whenever a correct pattern is
entered, no nontrivial constant-sign interval containing a correct scheduler
interval is available. Consequently, Theorem~\ref{thm:scalar_scheduler} does not
apply, and the existence of a trapping interval cannot be guaranteed. 

For SISO systems, the two constructions coincide.

\section{Numerical validation}
\label{sec:Sims}

All simulations were performed in Python with \texttt{solve\_ivp} function using `BDF' method \cite{b:doi:10.1137/S1064827594276424}. Sign changes were handled by event-based restarts: whenever a prescribed sign-switching time or scheduler threshold was reached, \(S(t)\) and/or \(N(\theta)\) was updated and the integration resumed from the same state.

\subsection{Flight vehicle roll reversal}
\label{sec:flight_roll_reversal}

We consider a flight-vehicle roll channel affected by the roll-reversal phenomenon. This effect appears in canard-controlled or highly maneuverable flight vehicles when the aerodynamic rolling moment generated by a control-surface deflection changes sign as a function of the angle of attack, sideslip angle, or flight velocity. In this situation, a command intended to produce a positive roll acceleration may produce the opposite effect, as illustrated in Fig.~\ref{fig:airplane}. Therefore, the control direction is not only unknown, but may also change during the maneuver.

\begin{figure}[ht!]
\centering
\includegraphics[width=7.0cm,keepaspectratio]{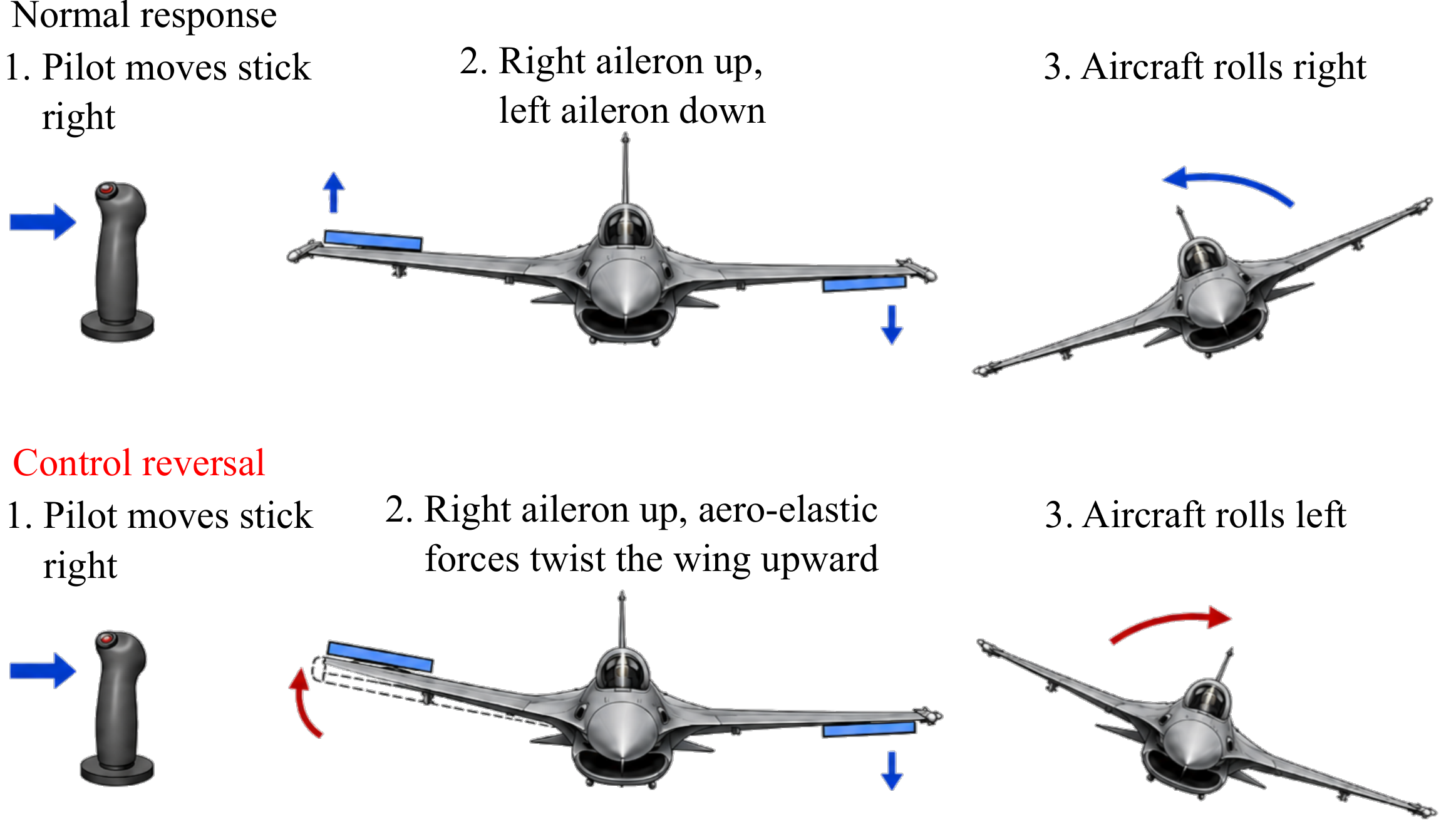}
\caption{Schematic representation of the roll-reversal mechanism. At low angle of attack (or speed), the control-surface deflection produces the nominal roll direction, whereas at high angle of attack (or speed) the induced aerodynamic interaction may reverse the effective roll moment.}
\label{fig:airplane}
\end{figure}

Following the roll-reversal model in~\cite{b:mirzaei2018roll}, we focus on the reduced roll dynamics
\begin{equation}
\dot\phi=\omega,\qquad
\dot \omega=L_p \omega+L_{\delta_a}(t)\delta_a+d(t),
\label{eq:flight_roll_model}
\end{equation}
where $\phi$ is the roll angle, $\omega$ is the roll rate, $\delta_a$ is the aileron command, $L_p<0$ is the roll damping coefficient, $d(t)$ is an external roll disturbance, and $L_{\delta_a}(t)$ is the sign-changing roll-control effectiveness. The control objective is to stabilize $\phi$ and $\omega$ despite the unknown sign of $L_{\delta_a}(t)$. 

Following~\cite{b:mirzaei2018roll}, define the sliding variable $\sigma=\omega+C\phi$, $C>0$, and the nominal Lyapunov function $V=\frac{1}{2}\sigma^2$. If the sign of $L_{\delta_a}(t)$ is known and $|L_{\delta_a}(t)|=\hat L_{\delta_a}$, the \textit{baseline} controller can be written as
\begin{equation}
\delta_a=S(t)\bar\delta_a, \quad 
\bar\delta_a
=
-\frac{1}{\hat L_{\delta_a}}
\left(
L_p \omega+C \omega+\rho_1
\frac{\sigma}{|\sigma|+\varepsilon}
\right),
\label{eq:flight_nominal_delta}
\end{equation}
where $S(t)=\operatorname{sign}(L_{\delta_a}(t))$ and $\rho_1>0$. 

We implement our modular switching sign compensator \eqref{eq:mimo_p_q_def}, \eqref{eq:mimo_adaptive_control} using the nominal control $\bar\delta_a$ in \eqref{eq:flight_nominal_delta}. Since these error dynamics are SISO, the vector and scalar schedulers presented in Section \ref{sec:Control} are the same, so we choose \eqref{eq:scalar_theta}, \eqref{eq:vector_thresholds}, \eqref{eq:mimo_N_i} for the compensation of the unknown sign.

For comparison, we also implement the online sign identifier proposed in~\cite{b:mirzaei2018roll}. According to~\cite{b:mirzaei2018roll}, on each interval $[t_1,t_2]$, the sign of the input gain is estimated as 
$\hat S(t_2)
=
\operatorname{sign}
\left[
\omega(t_2)-\omega(t_1)-L_p^{\rm id}
(\phi(t_2)-\phi(t_1))
\right]$ $\times 
\operatorname{sign}
\left(
\int_{t_1}^{t_2}\delta_a(t) dt
\right)$.
The corresponding controller is $\delta_a=\hat S(t)\bar\delta_a$. This estimator is very effective when the right-hand side of \eqref{eq:flight_roll_model} is exactly known. However, its performance is sensitive to model mismatch, because it explicitly uses the roll model to reconstruct the sign of $L_{\delta_a}(t)$. To test this limitation, the identifier uses a nominal value $L_p^{\rm id}$ instead of the real value $L_p$.

The numerical values were $C=2$, $\rho_1=5$, $\varepsilon=0.1$, $L_p=-0.1$, $\hat L_{\delta_a}=0.2$, and $\delta_a$ was saturated at $90$ deg. The true changing roll-control effectiveness, $L_{\delta_a}(t)$, was selected as a square wave with amplitude $0.2$ and period $5$ s. The matched disturbance $d(t)=\sin(0.01t)$ is covered by the nominal robust term in the Filippov sense; the regularization $\varepsilon$ is only its numerical implementation. The initial condition was $\phi(0)=-20$ deg and $\omega(0)=20$ deg/s. The identifier was updated every $0.5$ s and used $L_p^{\rm id}=-0.3$. The scalar scheduler used $\theta(0)=0$, $\gamma_\theta=0.9$, $\lambda_0=268$, and $r=1.005$.

The results are reported in Table~\ref{tab:combined_examples} and shown in Fig.~\ref{fig:flight_results}. The baseline gives the ideal response and is included only to indicate the performance attainable with exact sign knowledge. Under the model mismatch in the identifier, the method of~\cite{b:mirzaei2018roll} produces several unnecessary switches and does not recover the ideal sign sequence. This leads to a large residual roll angle, a high MISE of the sliding variable, and a larger RMS control effort. In contrast, the proposed scalar scheduler performs only three switches, synchronized with the sign changes of $L_{\delta_a}(t)$, and recovers a final accuracy closer to the baseline. Therefore, this example illustrates the effectiveness and robustness of the proposed approach.

\begin{table}[ht!]
\centering
\caption{Comparison: roll-reversal and visual-servoing examples.}
\label{tab:combined_examples}
\scriptsize
\setlength{\tabcolsep}{2.2pt}
\begin{tabular}{lcccc}
\hline
\multicolumn{5}{c}{Flight roll reversal}\\
\hline
Method & Max. $|\sigma|$ & MISE & RMS $\delta_a$ & Switches\\
\hline
Baseline Control & $2.00{\rm e}{1}$ & $1.53{\rm e}{1}$ & $1.61{\rm e}{1}$ & $0$\\
Controller \cite{b:mirzaei2018roll} & $3.77{\rm e}{1}$ & $2.17{\rm e}{2}$ & $3.96{\rm e}{1}$ & $11$\\
Proposed Control & $2.21{\rm e}{1}$ & $7.14{\rm e}{1}$ & $3.48{\rm e}{1}$ & $3$\\
\hline
\multicolumn{5}{c}{Robotic visual servoing}\\
\hline
Method & Max. $\|\sigma\|$ & MISE & RMS $u$ & Switches\\
\hline
Baseline Control & $1.94{\rm e}{-5}$ & $5.12{\rm e}{-12}$ & $3.25$ & $0$\\
Controller \cite{b:oliveira2010sliding} & $1.33$ & $2.28{\rm e}{-1}$ & $4.19$ & $3$\\
Proposed Control & $1.69$ & $1.93{\rm e}{-1}$ & $3.47$ & $3$\\
\hline
\end{tabular}
\end{table}

\begin{figure}[t!]
\centering
\includegraphics[width=0.80\columnwidth,keepaspectratio]{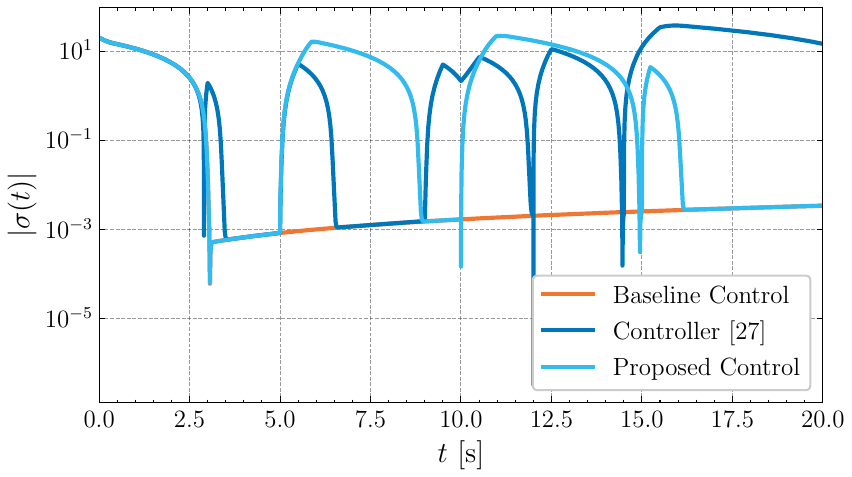}\\[-2mm]
{\small (a) Sliding variable}\\

\includegraphics[width=0.80\columnwidth,keepaspectratio]{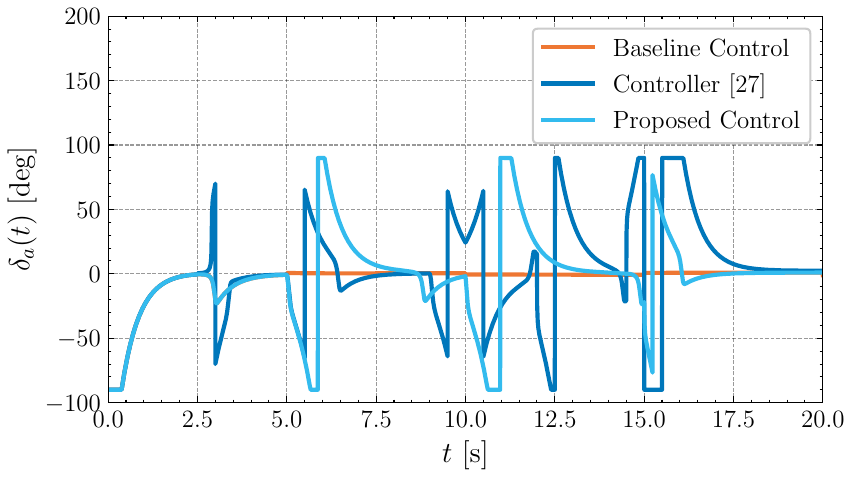}\\[-2mm]
{\small (b) Aileron command}\\

\includegraphics[width=0.80\columnwidth,keepaspectratio]{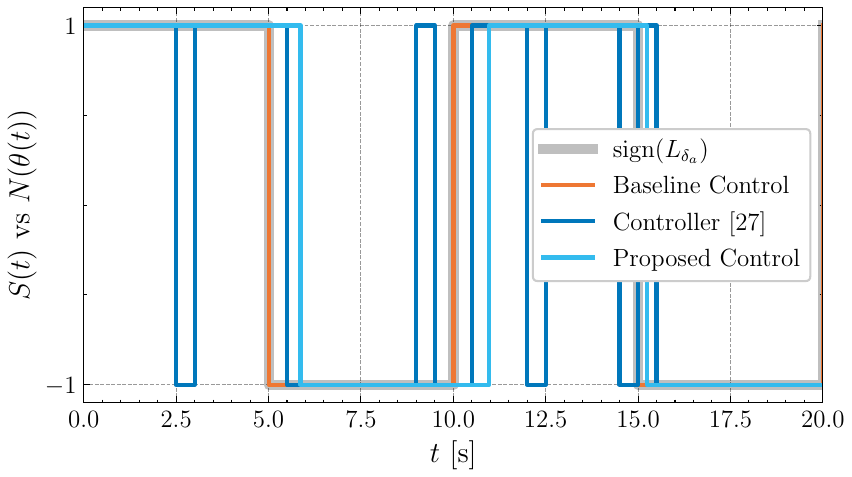}\\[-2mm]
{\small (c) Sign compensation}

\caption{Roll-reversal example. (a) Evolution of the sliding variable
$|\sigma(t)|$. (b) Aileron command $\delta_a(t)$. (c) Active sign
compensator. Under parameter mismatch, the model-based identifier switches
repeatedly, whereas the scalar scheduler performs only the required sign
changes and gives a smaller MISE.}
\label{fig:flight_results}
\end{figure}

\subsection{Robotic visual servoing}
\label{sec:robotic}

We consider the visual servoing benchmark of a planar robotic manipulator observed by a fixed uncalibrated camera. Let $y\in\mathbb{R}^2$ denote the position of the image feature attached to the end-effector, expressed in pixel coordinates, and let $y_m\in\mathbb{R}^2$ be the desired image trajectory. The image-plane dynamics are written as (see \cite{b:oliveira2010sliding,b:oliveira2011output,b:951370} for more details)
\begin{equation}
    \dot y = K_p(\psi)u+\phi_1(y),
    \qquad
    \phi_1(y)=
    \begin{bmatrix}
        y_1^2, \quad y_2^2
    \end{bmatrix}^T,
    \label{eq:visual-servoing-y}
\end{equation}
where $u\in\mathbb{R}^2$ is the Cartesian velocity command and $K_p(\psi)$ is the high-frequency gain matrix induced by the camera geometry. In particular,
\begin{equation}
    K_p(\psi)=
    \frac{f_0}{z_0+f_0}
    \begin{bmatrix}
        h_1&0\\
        0&h_2
    \end{bmatrix}
    R(\psi),
    \label{eq:visual-servoing-Kp}
\end{equation}
where $f_0$ is the focal length, $z_0$ is the average depth, $h_1,h_2$ are the camera scaling factors, and $R(\psi)$ is the planar rotation matrix associated with the unknown camera misalignment angle $\psi$. The reference is generated by $\dot y_m=-y_m+r(t)$, $r(t)= [2\cos t, 2\sin t]^T$. Defining the tracking error $\sigma=y-y_m$, one obtains
$    \dot\sigma
    =
    K_p(\psi)u+\phi(\sigma,t)$, with $\phi(\sigma,t)
    =
    \phi_1(\sigma+y_m)+y_m-r(t)$. The control objective is to steer $\sigma$ to the origin despite the unknown control direction induced by the uncalibrated camera.

We implement the unit-vector control law used in the monitoring-function approach \cite{b:1254099}. When the correct pre-compensator $S_\star$ is known, the baseline controller is
\begin{equation}
    u
    =
    S_\star \bar{u}, \qquad
    \bar{u} = -\rho(\sigma,t)
    \frac{\sigma}{\sqrt{\|\sigma\|^2+\varepsilon^2}},
    \label{eq:visual-servoing-nominal}
\end{equation}
where $\varepsilon>0$ is used only for numerical regularization. The modulation function $\rho$ is chosen in the spirit of the unit-vector model-reference adaptive control (UV-MRAC \cite{b:1254099}) design as 
$    \rho(\sigma,t)
    =
    \delta+c_2\|\sigma\|+c_3\hat d(t)$, where $\hat d(t)$ is an implementable bound for the equivalent perturbation $\phi(\sigma,t)$. This unit-vector term is the nominal robust compensation that makes zero tracking error an equilibrium in the ideal nonsmooth model. The baseline case is included only as an ideal benchmark, since it assumes the stabilizing matrix $S_\star$ is known.

We implement our modular switching sign compensator \eqref{eq:mimo_p_q_def}, \eqref{eq:mimo_adaptive_control} using the nominal control $\bar{u}$ in \eqref{eq:visual-servoing-nominal}, and the scalar scheduler \eqref{eq:scalar_theta}, \eqref{eq:scalar_pattern_set}, \eqref{eq:scalar_thresholds}, \eqref{eq:scalar_pattern_N}. $\Pi_j$ was chosen as in \eqref{eq:scalar_pattern_set_m2} and $\mu(k)=1+k\operatorname{mod}4$.

For comparison, the monitoring-function controller \cite{b:oliveira2010sliding} is also implemented. It uses the same unit-vector law \eqref{eq:visual-servoing-nominal}, but replaces $S_\star$ by a candidate matrix $S_q$ selected from the finite set \eqref{eq:scalar_pattern_set_m2}. The switch is triggered when the tracking error reaches the monitoring envelope
$    \varphi_q(t)
    =
    \|\sigma(t_q)\|e^{-\lambda_m(t-t_q)}
    +(q+1)e^{-\lambda_c t}$. Thus, the monitoring mechanism changes the pre-compensator $S_q$ whenever $\|\sigma(t)\|=\varphi_q(t)$.

The numerical parameters were $f_0=6$ mm, $z_0=1$ m, $h_1=119$ pixel/mm, $h_2=102$ pixel/mm, $y_1(0)=y_2(0)=0$, $\psi=\pi$, and nominal angle $\psi_0=\pi/2$. Thus the camera uncertainty reduces to the diagonal sign case $S_\star=-I_2$ considered in the theory; general rotations are outside the diagonal-sign framework. The modulation function was implemented with $\delta=0.1$, $c_2=0.1$, and $c_3=1$, as reported in \cite{b:oliveira2010sliding}. The monitoring parameters were $\lambda_m=\lambda_c=0.9$, while the scalar scheduler used $\theta(0)=0$, $\gamma=50$, $\lambda_0=1$, and $r=13$. As both switching controllers use the same compensator pattern \eqref{eq:scalar_pattern_set_m2}, both have to do three switches in order to get to the baseline pre-compensator $S_\star=-I_2$ corresponding to $R(\psi=\pi)$.

The results are summarized in Table~\ref{tab:combined_examples} and depicted in Fig. \ref{fig:robot_results}. The baseline control gives the expected ideal response, with no switching and negligible tracking error. Both adaptive switching strategies identify the stabilizing matrix after three switches. However, the scalar scheduler reaches the stabilizing candidate much earlier than the monitoring function, with lower mean integrated squared error (MISE), and also reducing the RMS control effort. The price paid is a larger peak transient error (max $\norm{\sigma}$) caused by the faster open-loop exploration of the candidate set. Thus, in this example, our approach improves the tracking and control-energy metrics while preserving the same number of switches and the same final tracking accuracy.

\begin{figure}[ht!]
  \centering
  \includegraphics[width=6.5cm,keepaspectratio]{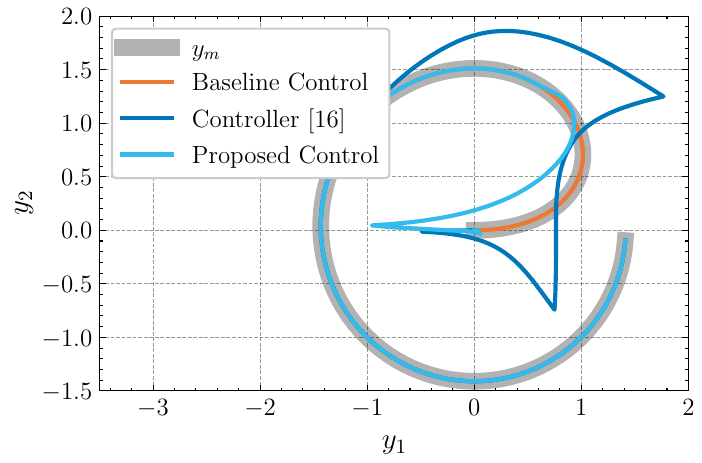}
  \caption{Image-plane trajectories for the robotic visual-servoing example. The gray curve represents the desired trajectory $y_m$, while the colored curves compare the baseline controller, the monitoring-function approach, and the proposed scalar scheduler.}
  \label{fig:robot_plane}
\end{figure}

\begin{figure}[t!]
\centering
\includegraphics[width=0.80\columnwidth,keepaspectratio]{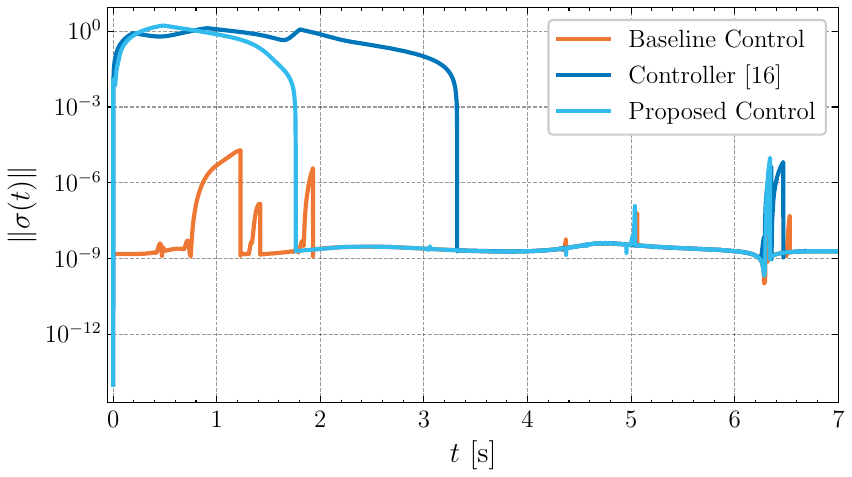}\\[-2mm]
{\small (a) Tracking error}\\

\includegraphics[width=0.80\columnwidth,keepaspectratio]{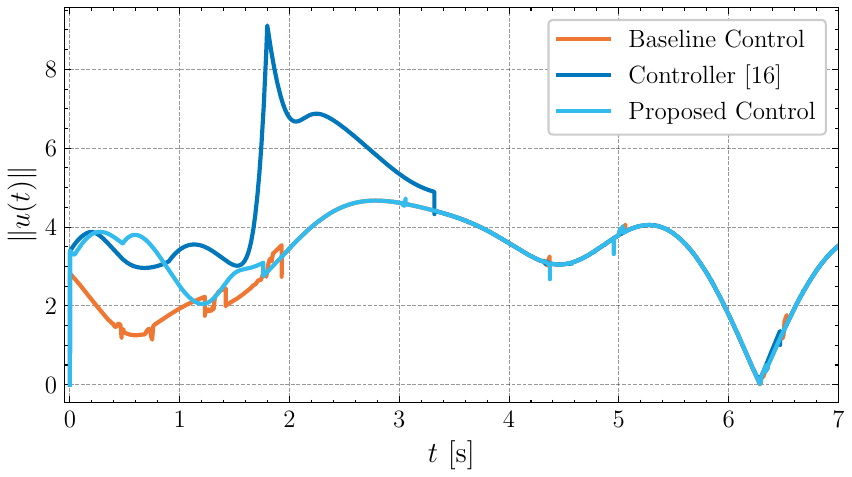}\\[-2mm]
{\small (b) Control effort}\\

\includegraphics[width=0.80\columnwidth,keepaspectratio]{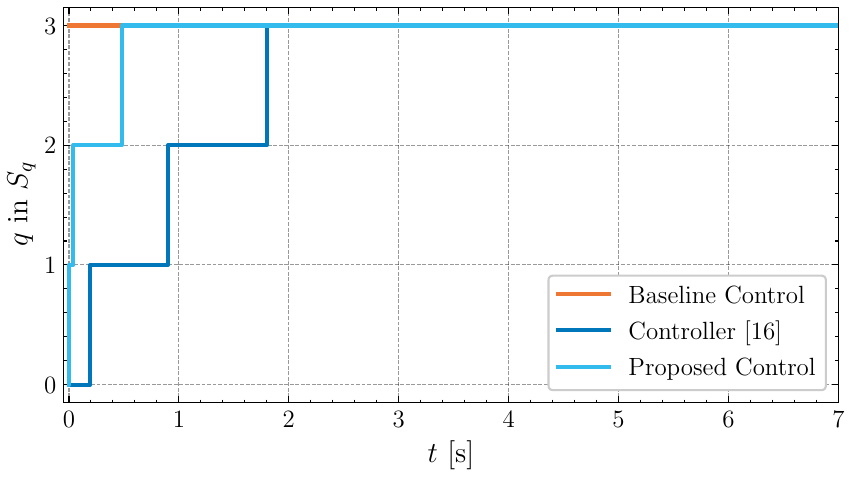}\\[-2mm]
{\small (c) Scheduler index}

\caption{Robotic visual-servoing example. (a) Evolution of the tracking error
norm $\|\sigma(t)\|$. (b) Control norm $\|u(t)\|$. (c) Active scheduler index
$q\in S_q$. The monitoring-function approach and the proposed scalar scheduler
switch among candidate control directions until the stabilizing sign
configuration is selected.}
\label{fig:robot_results}
\end{figure}

\subsection{Underground reservoir}
\label{sec:underground}

We now consider an academic underground reservoir example motivated by the need for
control of induced seismicity. The reservoir is described by a diffusion equation modelling fluid pressure changes $P(x,t)$, coupled with a point-wise logistic-like ODE representing the seismicity-rate density $R(x,t)$ (SR, see \cite{b:SMITH2022117697,b:Gutierrez-Stefanou-2025,b:kim2023,b:KIM2025103396,https://doi.org/10.1029/2018WR023587} for the application of this model in different reservoirs):
\begin{equation}
\begin{aligned}
    P_t(x,t) &= c_{\rm hy}\Delta P(x,t)
    + \frac{1}{\beta d_z}\sum_{j\in\{s,c_1,c_2\}} B_j(x)Q_j(t),\\
    R_t(x,t) &= R(x,t)
    \left[\gamma_1 P_t(x,t)-\gamma_2\big(R(x,t)-R^*\big)\right].
\end{aligned}
\label{eq:under_model}
\end{equation}
$x \in V$ is the spatial variable and $t \geq 0$ the time variable. The SR expresses the number of seismic events per unit time and volume. $Q_s$ denotes a well injecting fluid as shown in Fig. \ref{fig:under_results}. $Q_{c_1},Q_{c_2}$ are the controlled
wells. The function $B_j(x)$ is a smooth version of a Dirac distribution, \textit{i.e.}, the injection/extraction of fluid $Q_j(t)$ is applied over a given region $V_j^* \subset V$ instead of a point (see \cite{b:Gutierrez-Stefanou-2025} for more details on the model). The domain is $\Omega=[-2,2]\times[-2,2]$ km$^2$, with thickness
$d_z=0.2$ km as shown in Fig.~\ref{fig:reservoir}. The parameters used in the simulation are
$c_{\rm hy}=4.4\times 10^{-2}$ [km$^2$/h],
$\beta=5.7\times10^{-4}$ [MPa$^{-1}$],
$R^*=1$ [event year$^{-1}$ km$^{-3}$],
$f=0.5$ [-], $a=0.003$ [-],
$\sigma_n=16.67$ [MPa], $\dot\tau_0=10^{-7}$ [MPa/h],
$t_a=a\sigma_n/\dot\tau_0$ [h], $\gamma_1=f/(\dot\tau_0t_a)$ [MPa$^{-1}$], and
$\gamma_2=1/(t_a R^*)$ [km$^3$ events$^{-1}$]. The initial conditions were selected as $P(x,0)=0$, $R(x,0)=R^*$, and undrained boundary conditions were considered, \textit{i.e.}, $\nabla P(x,t) \cdot \hat{e}=0$ at $\partial V$, where $\hat{e}$ is a unit normal vector to $\partial V$. 
\begin{figure}[ht!]
  \centering
  \includegraphics[width=7.0cm,keepaspectratio]{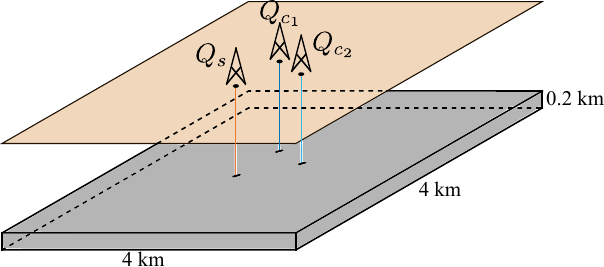}
  \caption{Schematic figure of the underground reservoir.}
  \label{fig:reservoir}
\end{figure}

\vspace{-10pt}
The control objective is to regulate the averaged SR over the whole domain ($y(t) = \frac{1}{V} \int_{V} R(x,t) \, dV$) output despite the fact that the physical mechanism producing seismicity is not assumed to be known a priori. Indeed, depending on various geophysical processes that can be hard to identify in advance, it is not always possible to know beforehand
whether injection or extraction is the stabilizing action (\textit{e.g.}, \cite{b:NAM2016GPM}). This uncertainty is
represented here by an unknown switching control direction
$S(t)=\diag(s_1(t),s_2(t))$, with $s_i(t)\in\{-1,1\}$, acting on the two controlled
channels.

We use the nominal stabilizing controller designed for the case $S(t)\equiv I_2$.
Let $\sigma(t) = \frac{1}{\gamma_{1_0} R^*_0}\left[y(t)-y_r(t) \right]$ be the regulated output error, where $\gamma_{1_0}, R^*_0$ are parameters to be selected, and $y_r(t)$ is the reference. Let $B_0$ be the nominal input matrix. The nominal flux control is selected according to \cite{b:Gutierrez-Stefanou-2025} as
\begin{equation}
\begin{aligned}
    \bar Q_c(t) &= B_0^+
    \left[-k_1\lceil \sigma(t)\rfloor^{\frac{1}{1-l}}+\nu(t)\right],\\
    \dot\nu(t) &= -k_2\lceil \sigma(t)\rfloor^{\frac{1+l}{1-l}},
\end{aligned}
\label{eq:under_nominal}
\end{equation}
where $\bar Q_c=[\bar Q_{c_1},\bar Q_{c_2}]^\top$, $k_1,k_2$ are positive gains, and $\lceil z\rfloor^\alpha:=|z|^\alpha\sign(z)$ is applied
component-wise. This is known as a MIMO super-twisting-type control \cite{b:Mathey-Moreno-2024}.

Since the direction of the stabilizing mechanism is unknown, the implemented control is
obtained by wrapping the nominal controller with the exponential vector scheduler introduced in \eqref{eq:mimo_theta_i}, \eqref{eq:vector_thresholds}, \eqref{eq:mimo_N_i}, with the online
threshold update \eqref{eq:tail_shift}, \eqref{eq:tail_update}. The control parameters were selected as $k_1=6.7 \times 10^{-3}$, $k_2=2.2 \times 10^{-5}$, $l=-1$, $\gamma_{i}=1 \times 10 ^{3}$, $R^*_0 = y_r(t) = R^*$, $\lambda_0=2 \times 10^{-4}$, $r=45$, $\varepsilon=1\times 10^{-8}$, and $\theta_i(0)=0$. The value $c_M=1\times 10^{-2}$ used in the online update was estimated from the simulated nominal Lyapunov ratio $\Psi/W$. To test our switching control, the unknown switching $S(t)$ changes every 2 [months], as seen in Fig. \ref{fig:under_results} (c).

The closed-loop behaviour is reported in Fig.~\ref{fig:under_results}.
The regulation error remains bounded during the sign changes and decreases with its nominal rate after the correct interval is reached. The controlled fluxes exhibit transient peaks at
the switching instants, as expected from the temporary use of an incorrect control
direction, and then settle once the scheduler recovers the stabilizing sign. Finally,
Fig.~\ref{fig:under_results} (c) confirms that both adaptive gains $N_1(\theta_1)$ and
$N_2(\theta_2)$ track the unknown switching mechanism $S(t)$, validating the use
of the vector scheduler in this reservoir-control setting.

\begin{figure}[t!]
\centering
\includegraphics[width=0.80\columnwidth,keepaspectratio]{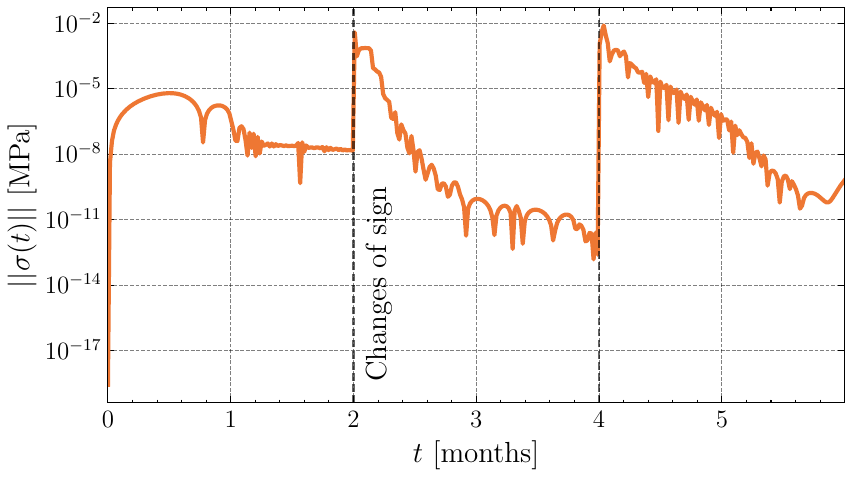}\\[-2mm]
{\small (a) Regulation error}\\

\includegraphics[width=0.80\columnwidth,keepaspectratio]{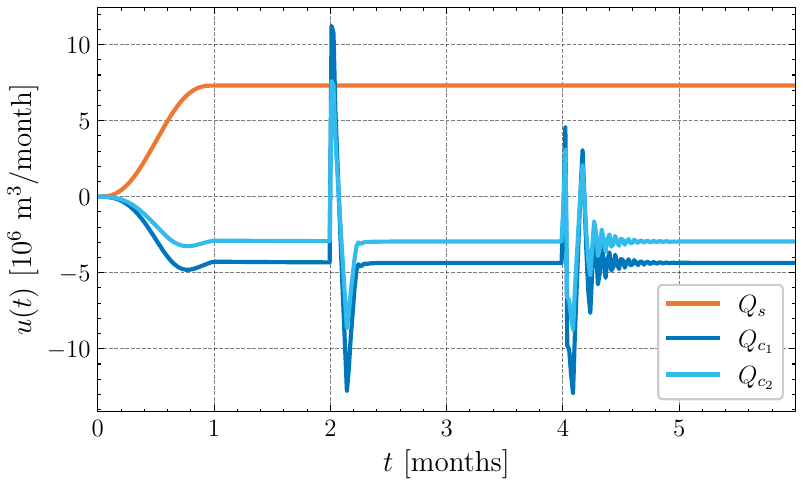}\\[-2mm]
{\small (b) Well fluxes}\\

\includegraphics[width=0.80\columnwidth,keepaspectratio]{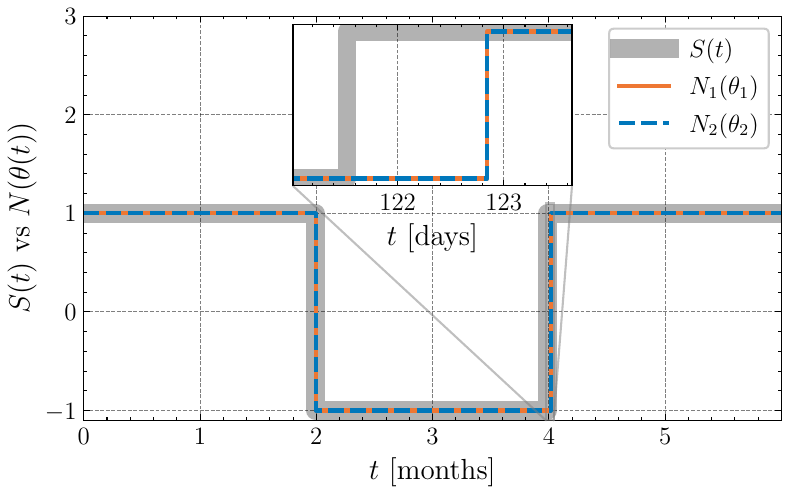}\\[-2mm]
{\small (c) Sign recovery}
\caption{Underground-reservoir example with the vector scheduler. 
(a) Evolution of the regulation error norm $\|\sigma(t)\|$. 
(b) Fluxes generated by the source term $Q_s$ and the controlled wells 
$Q_{c_1}$, $Q_{c_2}$. 
(c) Unknown switching mechanism $S(t)$ and adaptive gains 
$N_i(\theta_i)$, $i=1,2$. The inset shows the fast recovery of the correct
sign configuration after the second switching instant.}
\label{fig:under_results}
\end{figure}

\section{Conclusions}
\label{sec:Conclusions}

This paper introduced a modular switching sign-compensation approach for MIMO systems with uncertain control direction. The design preserves the nominal controller and inserts only a diagonal sign-compensation matrix between the nominal control signal and the plant. Therefore, the proposed mechanism does not modify the nominal control magnitude and does not rely on unbounded Nussbaum-type gain modulation.

The theoretical analysis was developed from a nominal Lyapunov dissipation inequality. A common degraded decay estimate was first obtained under a small-mismatch condition. Then, two scheduler mechanisms were studied. The vector scheduler provides a low-complexity channel-wise construction with an online residual-length update that makes any visited correct interval trapping. The scalar scheduler explores complete sign matrices and gives the design-time recovery guarantee through the exponential growth of the scheduler intervals, at the cost of cycling through \(2^m\) sign patterns. In both cases, once the correct sign compensation is trapped, the closed-loop dynamics coincide with the nominal ones and the original nominal stability property is recovered.

The numerical examples illustrate the two main features of the method. In the robotic visual-servoing benchmark, the scalar scheduler identifies the stabilizing sign matrix with the same number of switches as the monitoring-function approach, while reducing the integrated tracking error and the RMS control effort, at the price of a larger transient peak. In the flight-vehicle and reservoir examples, the scheduler compensates changes in the effective control direction and recovers the desired regulation after each sign change.

Future work will address coordinated vector schedulers that improve pattern reachability as in the scalar case. Furthermore, reset mechanisms for the auxiliary variables will be also investigated, which could prevent numerical overflow caused by the exponential growth of the scheduler thresholds while preserving the trapping properties after a correct sign interval is reached. 

\begin{ack}                               
Funded by the European Union. Views and opinions expressed are, however, those of the author(s) only and do not necessarily reflect those of the European Union or the European Research Council Executive Agency.  Neither the European Union nor the granting authority can be held responsible for them. This work is supported by the ERC grant INJECT, no. 101087771, doi: 10.3030/101087771.
\end{ack}

\bibliographystyle{abbrvunsrt}        
\bibliography{Bibliografias}           

\end{document}